\documentclass[journal]{IEEEtran}
\def\BibTeX{{\rm B\kern-.05em{\sc i\kern-.025em b}\kern-.08em
    T\kern-.1667em\lower.7ex\hbox{E}\kern-.125emX}}
\usepackage{amsmath,mathtools,amsfonts,amssymb,amsthm}

\usepackage{array}
\usepackage{cite}
\usepackage{xcolor}
\usepackage{threeparttable}
\usepackage[caption=false,font=normalsize,labelfont=sf,textfont=sf]{subfig}
\usepackage{textcomp}
\usepackage{stfloats}
\usepackage{url}
\usepackage{verbatim}
\usepackage{graphicx}
\usepackage{cuted}
\usepackage{float} 
\setlength\stripsep{3pt plus 1pt minus 1pt}
\hyphenation{op-tical net-works semi-conduc-tor IEEE-Xplore}
\usepackage{balance}
\usepackage{booktabs} 
\usepackage{caption}                         
\captionsetup[table]{name=TABLE,          
                     skip=1ex} 
\usepackage{algpseudocode}
\newtheoremstyle{sltheorem}
{}                
{}                
{}        
{10pt}                
{\bfseries}       
{:}               
{ }               
{}                
\theoremstyle{sltheorem}
\newtheorem{theorem}{Theorem}
\newtheorem{definition}{Definition}
\newtheorem{corollary}{Corollary}

\newtheorem{property}{Property}
\newtheorem{assumption}{Assumption}
\markboth{Journal of \LaTeX\ Class Files,~Vol.~18, No.~9, September~2020}
{Zhang \MakeLowercase{\textit{et al.}}: Accelerating Chance-constrained SCED via Scenario Compression}

\begin{document}
\title{Accelerating Chance-constrained SCED via Scenario Compression}
\author{Qian Zhang,~\IEEEmembership{Student Member, IEEE,} Le Xie,~\IEEEmembership{Fellow, IEEE}
\thanks{Qian Zhang and Le Xie are with the School of Engineering and Applied Sciences, Harvard University, Cambridge, MA 02138 USA (e-mail: qianzhang@g.harvard.edu; xie@seas.harvard.edu).}}

\maketitle

\begin{abstract}
This paper studies some compression methods to accelerate the scenario-based chance-constrained security-constrained economic dispatch (SCED) problem. In particular, we show that by exclusively employing the vertices after convex hull compression, an equivalent solution can be obtained compared to utilizing the entire scenario set. For other compression methods that might relax the original solution, such as box compression, this paper presents the compression risk validation scheme to assess the risk arising from the sample space. By quantifying the risk associated with compression, decision-makers are empowered to select either solution risk or compression risk as the risk metric, depending on the complexity of specific problems. Numerical examples based on the 118-bus system and synthetic Texas grids compare these two risk metrics. The results also demonstrate the efficiency of compression methods in both problem formulation and solving processes.
\end{abstract}

\begin{IEEEkeywords}
Chance constraint, scenario approach, security-constrained economic dispatch, scenario compression, complexity
\end{IEEEkeywords}

\section{Introduction}
\IEEEPARstart{I}{n}creasing penetration of variable renewable energy has posed unprecedented challenges for secure and economic dispatch in large power systems.  Many uncertainty optimization methods are proposed to replace the conventional deterministic SCED, especially \textit{stochastic optimization} (SO) \cite{prekopa2013stochastic} and \textit{robust optimization} (RO) \cite{ben2009robust}. Chance-constrained optimization (CCO), which combines the advantages of both SO and RO, has found a high potential application in SCED and many other areas in recent years \cite{geng2019data}. TABLE \ref{tab:taxonomy} compares the strengths and weaknesses of SO, RO, and CCO with select examples in the power system SCED. Unlike optimizing under expected value (SO) or making solutions robust in the whole uncertainty set (RO), CCO gives the feasibility guarantee of solutions under \emph{user-defined} probability. This feature allows decision-makers to balance risk and cost according to their preferences. 
\begin{table*}[h]
\centering
\begin{tabular}{>{\centering\arraybackslash\color{black}}p{2cm}>{\centering\arraybackslash\color{black}}p{4cm}>{\centering\arraybackslash\color{black}}p{4cm}>{\centering\arraybackslash\color{black}}p{4cm}}
\toprule
\textbf{Method} & \textbf{Advantages} & \textbf{Disadvantages} & \textbf{Power System Applications} \\ \midrule
\textbf{SO} & Optimizes expected value; flexible with various uncertainties & Computationally expensive; poor performance in worst scenarios & \cite{phan2014two,gu2016stochastic,kannan2020stochastic} \\ \midrule
\textbf{RO} & Less sensitive to uncertainty; computationally efficient & Overly conservative; simplifies uncertainty to bounded ranges & \cite{lorca2014adaptive,lubin2015robust,xie2013short} \\ \midrule
\textbf{CCO} & Realistic modeling of uncertainty; balances risk and cost & Computationally expensive; sensitive to distribution accuracy & \cite{bienstock2014chance,modarresi2018scenario,yang2021chance} \\ \bottomrule
\end{tabular}
\caption{{\color{black}Comparison of SO, RO, and CCO and some selected examples in power system SCED}}
\label{tab:taxonomy}
\end{table*}

\indent To make the CCO problem computationally tractable, there are two main approaches to modeling the uncertainty. A brute-force way is to truncate the uncertainty space based on the assumed probability distribution function and a tolerable risk level \cite{bertsimas2006tractable}, which has limitations when the probability distribution is complicated or hard to validate, such as the wind power forecasting error \cite{bludszuweit2008statistical,miettinen2020simulating}. In contrast,  data-driven methods describing the uncertainty directly from sampling are more practical without any assumptions on the probability distribution \cite{geng2019data}. This paper  focuses on adopting a method that is under active development  called the \textit{scenario approach}, which provides probabilistic guarantees based on a finite number of samples \cite{calafiore2006scenario,campi2021scenario}.\\
\indent In the development of the scenario approach, most research aims to find the precise relationship between the sample size and the risk level with a rigorous mathematical guarantee. The first exact expression of this relationship is presented in \cite{campi2008exact}, where the problem is regarded as the so-called \textit{fully supported} to give the risk bound \textit{before} solving the problem. After observing the fact that many reality optimization problems are not fully supported, the following works prefer to give a more tight risk bound \textit{after} solving the problem based on posterior evaluation of the support constraints (or complexity) \cite{campi2018wait}. Some recent works extend the above results to non-convex or nonstationary models \cite{campi2018general,shukla2024sample} and some machine learning applications \cite{campi2021theory,campi2023compression}.\\
\indent Along with the continuous breakthrough on the theory side, many papers spontaneously try to make the scenario approach computationally efficient mainly in two directions: saving the problem-solving time or simplifying the problem formulation. Except for advanced optimization algorithms which are updated in commercial solvers, the solving time in scenario approach relies on finding the problem's complexity after gaining the solution. In \cite{geng2021computing}, some algorithms are proposed to compute complexity in both convex and non-convex models under different problem structure assumptions. To find the complexity with minimum data resources, the authors introduce a new scheme where the size of input scenarios is incrementally tuned to hit a desired level of risk \cite{garatti2022complexity}. On the other hand, simplifying the scenario approach formulation is also a good way to improve both computational efficiency and memory. Margellos \textit{et al} \cite{margellos2014road} convert the scenario approach problem to a two-part robust version, where the scenario-based uncertainty is relaxed to set-based uncertainty first. This new formulation has the same risk guarantee and less computation time but performs worse than the original solution. {\color{black}Because the mathematical formulation of scenario approach is equal to applying robust philosophy to each sample of scenarios, some scenario reduction methods in chance-constrained or robust optimization areas \cite{shang2019data,wang2020data,goerigk2023optimal,jiang2024mathematical} inspire the development of a more general scenario compression scheme based on \emph{scenario approach} \cite{margellos2015connection,campi2023compression}.}\\
\indent These above works have found many successful applications in power systems with different sources of uncertainty: security assessment \cite{vrakopoulou2013probabilistic}, demand response \cite{ming2017scenario}, economic dispatch \cite{modarresi2018scenario}, energy storage planning \cite{yan2022two} and unit commitment \cite{geng2021computing}, etc. Among all these applications, the {\color{black}SCED} process is the area that has both strict security requirements and solving time constraints, typically within 5 minutes. To achieve the optimal solution of the original scenario approach formulation, the previous scenario-based economic dispatch papers \cite{geng2019data,ming2017scenario,modarresi2018scenario} refuse to simplify the problem formulation. Because of the convex property, the problem-solving algorithms in commercial solvers have a polynomial time, but the problem formulation process consumes a long time and huge computer memory in large-scale power grids which typically requires high-performance computing \cite{modarresi2018scenario}. Meanwhile, a simplified problem formulation can also save problem-solving time. In this paper, we try to make this problem formulation process more efficient for {\color{black}SCED} and our main contributions are twofold.\\
\indent First, the scenarios compression method is proposed to simplify the scenario-based {\color{black}SCED} formulation without any uncertainty space relaxation. After finding the convex hull of the scenarios-based uncertainty space, we prove the equivalent of inputting the vertexes and the whole scenarios. Unlike too general problem setting in \cite{margellos2014road}, our result mainly relies on the special structure of the {\color{black}SCED} problem, where the decision variables and the uncertainty variables are just linearly combined with each other.\\
\indent Second, leveraging on the recent theory findings given in \cite{campi2023compression,garatti2022risk}, we can also estimate both the upper and lower compression risk bound based on the number of vertexes. Different from evaluating the risk based on the complexity of the solution, this new validation scheme directly generates the compression risk based on the complexity of the sample space, which quantifies the probability of the new scenario violating the compressed space. {\color{black}While the solution risk bounds are provided after solving the optimization problem, the compression risk bounds can be directly established in a \emph{prior} manner. In practice, engineers can choose from these risk-bound options based on their preference for either accuracy or computational efficiency.}\\
\indent The remainder of this article is organized as follows. Section \ref{sec:Problem} formulates the chance-constrained {\color{black}SCED} problem and introduces the \textit{updated} theory results of scenario approach in solving chance-constrained problems. After clarifying the main computational obstacles of applying scenario approach to a large-scale system at the end of Section \ref{sec:Problem}, Section \ref{sec:compre} proposes two useful scenario compression methods to save the problem formulation and solving time. The new compression risk validation scheme is also presented to quantify the risk from sample space rather than solution space. The efficacy of the proposed approach is demonstrated on 118-bus and synthetic Texas grids in Section~\ref{sec:case}. Section \ref{sec:conclusion} gives the concluding remarks.

\section{Problem Formulation} \label{sec:Problem}
\subsection{Chance-Constrained {\color{black}SCED}}
\indent {\color{black} The conventional {\color{black}SCED} problem is reformulated in a chance-constrained form, which is also similar with the chance-constrained DC optimal power flow (DC-OPF) in many previous papers \cite{vrakopoulou2013probabilistic,bienstock2014chance}.} In our setting, system uncertainty refers to the wind power forecasting error, which is the main uncertain factor in the economic dispatch process. This model can also be extended to other uncertain sources, such as solar and demand, {\color{black}where the uncertainty can be assumed either following a distribution function \cite{ming2017scenario} or directly generated from historical data \cite{zhang}. }\\
\indent The wind generation $w=\hat{w}+\tilde{w}$ consists of \textit{deterministic} wind forecast value $\hat{w} \in \mathbf{R}^{n_w}$ and the \textit{uncertain} forecast error $\tilde{w} \in \Delta$, where $\Delta\subseteq\mathbf{R}^{n_w}$ is the uncertainty set. After introducing wind power into the conventional chance-constrained DC-OPF problem, we have:
\begin{subequations} 
\label{cco}
\begin{align}
\min _{g, \eta}\; &c(g)   \label{subeqn:obj}\\
\text { s.t. }&  \mathbf{1}^{\top} g=\mathbf{1}^{\top} d-\mathbf{1}^{\top} \hat{w} \label{1b} \\
& \underline{g} \preceq g \preceq \bar{g} \label{1c}\\
& f(\hat{w}, \tilde{w})=H_g\left(g-\mathbf{1}^{\top} \tilde{w} \eta\right)-H_d d+H_w(\hat{w}+\tilde{w}) \label{subeqn:f}\\
& \mathbb{P}_{\tilde{w}}\left(\begin{array}{l}\underline{f} \preceq f(\hat{w}, \tilde{w}) \preceq \bar{f}\\
\underline{g} \preceq g-\mathbf{1}^{\top} \tilde{w} \eta \preceq \bar{g}\\
R_d\preceq-\mathbf{1}^{\top} \tilde{w} \eta\preceq R_u \end{array}\right) \geq 1-\epsilon \label{subeqn:risk}\\
& \mathbf{1}^{\top} \eta=1 \label{1f}
\end{align}
\end{subequations}
where the decision variables are generation output levels $g \in \mathbf{R}^{n_g}$, and an affine control policy $\eta \in \mathbf{R}^{n_g}$ of automatic generation control to allocate total wind fluctuation in real-time. The objective function is the total generations cost $c(g)$. The load level is $d \in \mathbf{R}^{n_d}$, and transmission line flows $f \in \mathbf{R}^{n_f}$ are calculated using (\ref{subeqn:f}), where $H_g$, $H_d$, and $H_w $ are the corresponding sub-matrix of the power transfer distribution factor (PTDF) matrix $H$. {\color{black}The constraints related to the forecasting error $\tilde{w}$ are modeled as a chance-constraint form under risk $\epsilon$ in (\ref{subeqn:risk}), including transmission line flow limits $[\underline{f}, \bar{f}]\in \mathbf{R}^{n_f} \times \mathbf{R}^{n_f}$, generation capacity limits $[\underline{g}, \bar{g}]\in \mathbf{R}^{n_g} \times \mathbf{R}^{n_g}$ and the ramp up(down) rate limits $[R_d, R_u]\in \mathbf{R}^{n_g} \times \mathbf{R}^{n_g}$.}\\

\indent Because of the probability measure term $\mathbb{P}_{\tilde{w}}$, it is impossible to solve the above problem directly. Over the past decade, many different ways have been proposed to reformulate chance-constrained problems into tractable deterministic optimization problems, such as sample average approximation (SAA) \cite{luedtke2008sample} and moment-based reformulations \cite{roald2015security}. Interested readers may refer to the latest review \cite{roald2023power} for a more comprehensive summary.
\subsection{The Scenario Approach}
\indent In this paper, we focus on a novel data-driven method called \textit{scenario approach} \cite{campi2008exact} to make the above chance-constrained problem tractable. Leveraging on the randomization of the sampled scenarios, the scenario approach provides probabilistic guarantees based on the finite identical independent distributed (\textit{i.i.d.}) scenarios. For the chance-constrained {\color{black}SCED} problem (\ref{cco}), supposing we have the \textit{i.i.d.} wind forecasting error scenarios set $\mathcal{N}:=\left\{\tilde{w}_1, \tilde{w}_2, \cdots, \tilde{w}_N\right\}$, the chance-constrained inequalities (\ref{subeqn:risk}) can be replaced by scenario-based inequalities (\ref{saa})-(\ref{sac}):
\begin{subequations} \label{saaa}
\begin{align}
\min _{g, \eta}\; &c(g)   \nonumber\\
\text { s.t }&  \underline{f} \preceq f(\hat{w}, \tilde{w}_i) \preceq \bar{f}\label{saa}\\
&\underline{g} \preceq g-\mathbf{1}^{\top} \tilde{w}_i \eta \preceq \bar{g}\label{sab}\\
&R_d\preceq-\mathbf{1}^{\top} \tilde{w}_i \eta\preceq R_u \label{sac}\\ 
& i=1,2,3,...,N \nonumber \\
& (\ref{1b}), (\ref{1c}), (\ref{subeqn:f}), (\ref{1f}) \nonumber
\end{align}
\end{subequations}
\indent Before giving the quantified risk level for the above scenario problem $\mathrm{SP}(\mathcal{N})$, we introduce some basic definitions under the scenario approach scheme.
\begin{definition}[Support Constraint] \label{supportconst}
The scenario-dependent constraint corresponding to sample $\tilde{w}_s, s \in \{1,2,...,\mathcal{S}\}$, is a \textit{support constraint} or \textit{support scenario} if its removal improves the solution of $\mathrm{SP}(\mathcal{N})$, i.e., if it decreases the optimal cost $c(g)$.    
\end{definition}
\begin{assumption}[Non-degeneracy] 
For every $N$, the solution $(g^*,\eta^*)$ to the optimization problem (\ref{saaa}) coincides with probability 1 with the solution that is obtained after eliminating all the constraints that are not of support.    
\end{assumption}
\begin{definition}[Solution Complexity]
The number of support scenarios in $\mathrm{SP}(\mathcal{N})$ is defined as the solution complexity $s_N^*$.    
\end{definition}
\textit{Remark}: Rigorous speaking, the number of support scenarios is related to both solution and sample distribution, and some papers \cite{campi2008exact,geng2021computing} define it as sample complexity. We define it as solution complexity here to distinguish it from the complexity in the later scenario compression process.
\begin{definition}[Solution Risk] \label{vp}
The \textit{solution risk} (also called \textit{violation probability}) of a candidate solution $(g^*,\eta^*)$ is defined as the probability that $(g^*,\eta^*)$ is infeasible, i.e., $\mathbb{V}_{\tilde{w}}(g^*,\eta^*):=\mathbb{P}_{\tilde{w}}((g^*,\eta^*)\notin \mathcal{X}_{\tilde{w}})$, where $\mathcal{X}_{\tilde{w}}$ is the decision set generated by $\mathrm{SP}(\mathcal{N})$.
\end{definition}
\indent Based on the above definitions, Campi \textit{et al.} first gives the relationship between the risk and number of scenarios for $\mathrm{SP}(\mathcal{N})$ in \cite{calafiore2006scenario,campi2008exact}, and update it to a more applicable form recently in \cite{garatti2022risk} which reveals a fundamental correlation structure that links the risk to the solution complexity, i.e. Theorem \ref{theorem1}.
\begin{theorem}[Solution Risk Bounds \cite{garatti2022risk}] \label{theorem1} 
Consider the number of scenarios $N$ is larger than the number of decision variables $n$. Given a confidence parameter $\beta \in (0,1)$, for any $k = 0,1,\dots,n$ consider the polynomial equation in the $t$ variable
\begin{equation}
\resizebox{1\hsize}{!}{$
    \binom{N}{k} t^{N-k}-\frac{\beta}{2 N} \sum_{i=k}^{N-1} \binom{i}{k} t^{i-k}
    -\frac{\beta}{6 N} \sum_{i=N+1}^{4 N} \binom{i}{k} t^{i-k}=0
$}
\end{equation}
This equation has exactly two solutions in $[0,+\infty)$, which are named as $\underline{t}(k) <$. Supposing $\underline\epsilon(k):= max\{0,1-\bar{t}(k)\}$ and $\bar\epsilon(k):= 1-\underline{t}(k)$, it holds that
\begin{equation} \label{exact}
\mathbb{P}_{\tilde{w}}^N\left\{\underline{\epsilon}\left(s_N^*\right) \leq \mathbb{V}_{\tilde{w}}(g^*,\eta^*) \leq \bar{\epsilon}\left(s_N^*\right)\right\} \geq 1-\beta
\end{equation}
\end{theorem}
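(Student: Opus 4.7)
The theorem is a two-sided ``wait-and-judge'' statement: after observing the complexity $s_N^*$ of the scenario solution, we infer a confidence region for the violation probability $\mathbb{V}_{\tilde w}(g^*,\eta^*)$. My plan is to follow the exchangeability-based route pioneered in the scenario approach literature, reducing the joint law of $(s_N^*,\mathbb{V}_{\tilde w})$ to a computable expression and then choosing $\underline{\epsilon}(k),\bar\epsilon(k)$ so that the tail probabilities match the $\beta$-budget split $\beta/2+\beta/6$ that is visible in the polynomial.

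\textbf{Step 1: Combinatorial identity via exchangeability.} Under the non-degeneracy assumption, the optimizer of $\mathrm{SP}(\mathcal N)$ is almost surely determined by its set of support scenarios, whose cardinality is $s_N^*\in\{0,1,\dots,n\}$. Because $\tilde w_1,\dots,\tilde w_N$ are i.i.d., the index set of support scenarios is uniform over $k$-subsets of $\{1,\dots,N\}$ conditional on $s_N^*=k$. First I would fix a $k$-subset $I$, condition on it being the support set, and observe that the remaining $N-k$ scenarios are i.i.d.\ draws that, by construction, do not violate the induced solution. This produces an exact expression of the form
\begin{equation*}
\mathbb{P}^N_{\tilde w}\{s_N^*=k,\ \mathbb{V}_{\tilde w}\le v\} \;=\; \binom{N}{k}\int_{0}^{v} (1-u)^{N-k}\,\mu_k(\mathrm du),
\end{equation*}
for a sub-probability measure $\mu_k$ on $[0,1]$ that does not depend on the threshold $v$. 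The factor $(1-u)^{N-k}$ records the probability that $N-k$ fresh samples stay feasible.

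\textbf{Step 2: From the identity to polynomial tails.} Integrating the identity above with respect to $v$ and summing over $k$ gives bounds of the form
\begin{equation*}
\mathbb{P}^N_{\tilde w}\{\mathbb{V}_{\tilde w}>\bar\epsilon(s_N^*)\} \;\le\; \sum_{k=0}^{n}\binom{N}{k}\,(1-\bar\epsilon(k))^{N-k}\cdot(\text{Beta-type correction}),
\end{equation*}
and similarly in the opposite direction with $\underline{\epsilon}(k)$. The ``Beta-type correction'' factors, expanded in the monomials $t^{i-k}$ with $t=1-\epsilon$, are exactly the partial sums $\sum_i \binom{i}{k} t^{i-k}$ appearing in the polynomial. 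The asymmetric constants $\tfrac{\beta}{2N}$ and $\tfrac{\beta}{6N}$ correspond to splitting the total $\beta$ budget between the upper tail (weight $\beta/2$) and the lower tail (weight $\beta/6$, reflecting the finer behaviour needed past $i=N$) and distributing it uniformly over the $N$-many complexity strata. Setting the polynomial to zero is then equivalent to imposing equality in the tail budget, and $\underline{t}(k)<\bar t(k)$ are precisely the two crossings where the upper and lower bounds bind.

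\textbf{Step 3: Root structure and final bound.} I would then verify that the polynomial has exactly two nonnegative roots by noting that it is $\binom{N}{k}t^{N-k}$ (a strictly increasing function for $t\ge 0$) minus a nonnegative polynomial that dominates near $0$ and near $\infty$ in a controlled way; Descartes' rule of signs gives the count directly. Defining $\bar\epsilon(k)=1-\underline{t}(k)$ and $\underline{\epsilon}(k)=\max\{0,1-\bar t(k)\}$ and summing the tail estimates of Step~2 across $k=0,\dots,n$ then yields
\begin{equation*}
\mathbb{P}^N_{\tilde w}\bigl\{\mathbb{V}_{\tilde w}(g^*,\eta^*)>\bar\epsilon(s_N^*)\bigr\}+\mathbb{P}^N_{\tilde w}\bigl\{\mathbb{V}_{\tilde w}(g^*,\eta^*)<\underline{\epsilon}(s_N^*)\bigr\}\le\beta,
\end{equation*}
which is the claim.

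\textbf{Main obstacle.} The delicate step is Step~1: establishing the exact joint-law identity rather than a loose inequality. One has to argue that support sets are a.s.\ unique (this is the only place non-degeneracy enters), that the subset identification is measurable, and that conditioning on a particular $k$-subset being the support set does not alter the distribution of the remaining $N-k$ samples beyond the feasibility constraint. Once that identity is in place, the polynomial-root calculation of Step~3 is bookkeeping, so I would budget the bulk of the work there.
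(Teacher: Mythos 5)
This theorem is imported verbatim from \cite{garatti2022risk}; the paper you are reading states it with a citation and offers no proof of its own, so there is no in-paper argument to compare against. Judged against the actual proof in the cited source, your skeleton (exchangeability $\Rightarrow$ exact joint-law identity for $(s_N^*,\mathbb{V})$ $\Rightarrow$ polynomial certificate $\Rightarrow$ two-root analysis) is the right one, and your Step~1 identity
$\mathbb{P}^N\{s_N^*=k,\ \mathbb{V}\le v\}=\binom{N}{k}\int_0^v(1-u)^{N-k}\mu_k(\mathrm du)$
is indeed the core lemma, with non-degeneracy entering exactly where you say it does.

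The genuine gap is in Step~2. The constants $\tfrac{\beta}{2N}$ and $\tfrac{\beta}{6N}$ do \emph{not} arise from splitting the confidence budget between an upper tail (weight $\beta/2$) and a lower tail (weight $\beta/6$); both $\underline{t}(k)$ and $\bar t(k)$ are the two roots of the \emph{same} polynomial, so the asymmetric weights cannot be tail-specific. What is actually needed, and what your sketch omits, is the family of cross-sample-size normalization (``moment'') conditions $\binom{m}{k}\int_0^1(1-u)^{m-k}\mu_k(\mathrm du)\le 1$ for every hypothetical sample size $m\ge k$, obtained by applying your Step~1 identity to scenario programs with $m$ samples and using $\sum_k\mathbb{P}^m\{s_m^*=k\}\le 1$. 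The polynomial is a nonnegative combination of these conditions --- the indices $i=k,\dots,N-1$ weighted by $\beta/(2N)$ and $i=N+1,\dots,4N$ weighted by $\beta/(6N)$ --- chosen so that the total budget $\tfrac{(N-k)\beta}{2N}+\tfrac{3N\beta}{6N}\le\beta$ and so that the resulting polynomial majorizes the indicator of the event $\{1-u\notin[\underline{t}(k),\bar t(k)]\}$ against the weight $\binom{N}{k}(1-u)^{N-k}$. Without deriving that family of conditions (which requires arguing that the support-set construction is consistent as the sample size varies, not just at $m=N$), the ``Beta-type correction'' in your Step~2 cannot be justified and the specific truncation at $4N$ is unexplained. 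Your Step~3 root count via Descartes' rule is fine: the coefficient signs are $(-,\dots,-,+,-,\dots,-)$, giving at most two positive roots, and positivity at an interior point plus negativity at $0$ and at $+\infty$ gives exactly two.
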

\indent Because the exact value of $s_N^*$ depends on the solution, the decision makers need to tune the input sample size $N$ based on solutions to finally hit the desired risk level. In the past ten years, many papers have addressed this issue, including how to tune sample size or risk efficiently \cite{campi2011sampling,garatti2022complexity}, and how to compute the value of $s_N^*$ in both convex and non-convex problems \cite{geng2021computing}. {\color{black} 
Meanwhile, the conservativeness of the risk upper bound has also significantly improved over the past decade. By setting a confidence parameter  $\beta=10^{-3}$ and a fixed scenario size $N=500$, Fig.\ref{complexity} illustrates the risk upper bounds provided by various studies. Notably, the most recent result from \cite{garatti2022risk} offers the least conservative risk bound, which we have applied for the first time in a power system problem.}
\begin{figure}[H]
\centering
\includegraphics[scale=0.42]{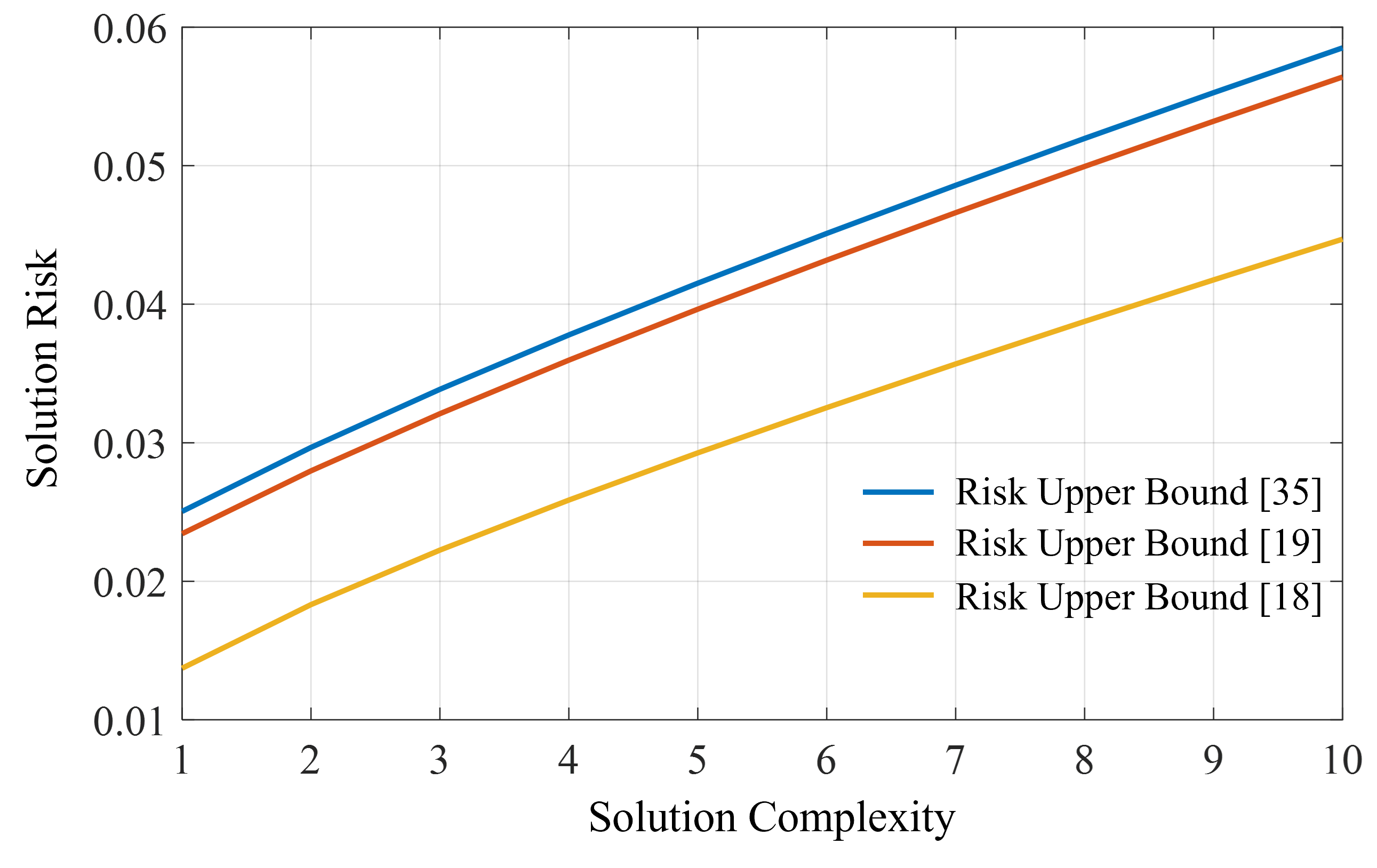}
  \caption{{\color{black}The comparison of risk upper bound from different papers}}
  \label{complexity}
\end{figure}
\indent Thanks to the convex property of the DC-OPF economic dispatch problem, the scenario-based problem can be easily solved by commercial software, such as CVX, CPLEX, and Gurobi. But when the system's original constraints and the scenario size are both very large, it will cause a long solving time. Meanwhile, even though the solving time may be improved in the future, only focusing on problem-solving time is not enough when applied to a large-scale power network for two reasons.\\ 
\indent First, the number of original constraints (\ref{saaa}) extend $N$ times after considering $N$ input scenarios, which increases the problem formulation time before solving it. Taking the IEEE 118 standard test system for example, the number of line, generator, and reserve constraints is 372, 108, and 108, respectively, which means the total number of constraints is $588 \times N$ after going through $N$ scenarios. Supposing we need 300 scenarios to meet the risk requirement, we must model a convex programming with 176400 constraints, which requires a huge computer memory and consumes much formulating time.\\ 
\indent Second, the frequent change of system operation modes, such as generators shouting down and line switching, means we might need to reformulate the programming. Similarly, tuning sample size $N$ to meet the desired risk also results in problem reformulating. The problem formulation is not a one-shot issue in scenario-based {\color{black}SCED} problems, and simplifying the problem formulation is crucial to applying the scenario approach to the real world.\\
\indent High-performance computing resource with large computer memory is helpful to store these constraints \cite{geng2021computing,modarresi2018scenario}. Still, the formulating time is hard to reduce because each constraint needs to be converted to standard form accepted by solvers \cite{lofberg2004yalmip,nethercote2007minizinc,legat2022mathoptinterface}. On the other hand, a simpler problem formulation can also save problem-solving time in return.   

\section{Scenario Compression} \label{sec:compre}
\indent Scenario compression means reducing the scenario size before formulating the problem while keeping the original solution as invariant as possible. In this section, the convex hull and box compression will be deeply discussed or proved, but the proposed compression risk validation scheme can be extended to any scenario compression method.
\subsection{Convex Hull Compression} \label{chc}
The convex hull compression means finding the smallest convex polygon, that encloses all of the scenarios in the set. {\color{black}This compression technique is similar to both deterministic optimization problems \cite{ardakani2013identification} and data-driven robust optimization problems \cite{wang2020data}, but it is first proved to be compatible with the scenario approach for the chance-constrained {\color{black}SCED} problem in this paper.}
\begin{definition}[Vertex of Convex Hull] \label{vertexofhull}
A scenario $\tilde{w}_v$ after convex hull compression is called \textit{vertex} if it cannot be written in the form $\tilde{w}_v = t\tilde{w}_x+(1-t)\tilde{w}_y$ where $\tilde{w}_x$ and $\tilde{w}_y$ of $\mathcal{N}$ are distinct scenarios and $0<t<1$, i.e. $\tilde{w}_v$ is not between two other points of $\mathcal{N}$.
\end{definition}
\indent Suppose $v$ vertexes labeled $\tilde{w}_{i_1}, \tilde{w}_{i_2},..., \tilde{w}_{i_v}$ after convex hull compression of the given scenarios set $\mathcal{N}$. Leveraging on the special structure of the robust counterpart, we can use the vertex of the convex hull to represent the whole uncertainty scenario-generated space, which is stated as Theorem \ref{vertex-all}.
\begin{theorem}[Invariant Solution]  \label{vertex-all}
The solution of the {\color{black}SCED} scenario problem $\mathrm{SP}(\mathcal{N})$ is invariant after replacing the scenarios set $\mathcal{N}$ with its convex hull vertexes $\tilde{w}_{i_1}, \tilde{w}_{i_2},..., \tilde{w}_{i_v}$.
\end{theorem}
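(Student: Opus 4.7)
The plan is to show that for any fixed $(g,\eta)$ the scenario-dependent constraints (\ref{saa})--(\ref{sac}) cut out a convex polyhedron in the $\tilde{w}$ variable, so that vertex feasibility forces feasibility at every sample in $\mathcal{N}$. The first step is to inspect each constraint treating $(g,\eta)$ as fixed. The flow expression in (\ref{subeqn:f}) is affine in $\tilde{w}$ because $\mathbf{1}^{\top}\tilde{w}\eta$ is a scalar multiple of the fixed vector $\eta$ and $H_{w}\tilde{w}$ is linear; likewise $g-\mathbf{1}^{\top}\tilde{w}\eta$ and $-\mathbf{1}^{\top}\tilde{w}\eta$ are affine in $\tilde{w}$. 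Imposing the lower and upper bounds then yields a finite family of halfspaces, whose intersection $\mathcal{H}(g,\eta)\subseteq\mathbf{R}^{n_{w}}$ is a closed convex polyhedron.

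The second step is a two-way feasibility argument for the $(g,\eta)$-feasible set. One inclusion is immediate because $\{\tilde{w}_{i_{1}},\ldots,\tilde{w}_{i_{v}}\}\subseteq\mathcal{N}$, so a feasible point of $\mathrm{SP}(\mathcal{N})$ is trivially feasible for the compressed program. For the reverse, suppose $(g,\eta)$ satisfies (\ref{saa})--(\ref{sac}) at every vertex $\tilde{w}_{i_{j}}$. By Definition \ref{vertexofhull}, any $\tilde{w}\in\mathcal{N}$ can be written as a convex combination $\tilde{w}=\sum_{j=1}^{v}\lambda_{j}\tilde{w}_{i_{j}}$ with $\lambda_{j}\geq 0$ and $\sum_{j}\lambda_{j}=1$, and convexity of $\mathcal{H}(g,\eta)$ then forces $\tilde{w}\in\mathcal{H}(g,\eta)$. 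Because the remaining constraints (\ref{1b}), (\ref{1c}), (\ref{1f}) and the objective $c(g)$ in (\ref{subeqn:obj}) do not involve $\tilde{w}$, the two scenario programs share exactly the same feasible set and hence the same optimal value and minimizer.

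The only real subtlety to be careful about is that $\mathbf{1}^{\top}\tilde{w}\eta$ is bilinear in $(\tilde{w},\eta)$ rather than jointly linear, so the convexity argument works only when $\eta$ is held fixed while the $\tilde{w}$-feasible set is examined. This is precisely the structural feature of the chance-constrained SCED formulation (\ref{cco}) that the authors highlight as distinguishing their result from the more generic scenario-to-robust conversion in \cite{margellos2014road}; absent this affine-in-$\tilde{w}$ structure the convex-hull compression would in general be a relaxation rather than an exact reformulation. Beyond this observation the proof requires no heavy machinery and the whole argument reduces to the elementary fact that affine inequalities are preserved under convex combinations of their variables.
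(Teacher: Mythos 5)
Your proposal is correct and follows essentially the same route as the paper: both reduce the claim to showing the two scenario-constrained feasible regions coincide, use the trivial inclusion from $\{\tilde{w}_{i_1},\ldots,\tilde{w}_{i_v}\}\subseteq\mathcal{N}$, and obtain the reverse inclusion by writing each scenario as a convex combination of the hull vertices and exploiting that every scenario-dependent constraint is affine in $\tilde{w}$ for fixed $(g,\eta)$. The only difference is presentational — you invoke convexity of the polyhedron $\mathcal{H}(g,\eta)$ once to cover all constraint families uniformly, whereas the paper carries out the explicit max/min convex-combination estimates separately for the generator/reserve constraints and the line-flow constraints.
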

\begin{proof}
Because the objective function and other deterministic constraints are the same before and after convex hull compression, proving the solution invariant property is equal to proving the constraints region $\mathcal{R}_1$ constructed by (\ref{saaa}) is the same as the region $\mathcal{R}_2$ constructed by (\ref{vertex})
\begin{subequations} \label{vertex}
\begin{align}
\underline{f} \preceq f(\hat{w}, \tilde{w}_i) \preceq \bar{f} \label{vertexa} \\ 
\underline{g} \preceq g-\mathbf{1}^{\top} \tilde{w}_i \eta \preceq \bar{g} \label{vertexb}\\ 
R_d\preceq-\mathbf{1}^{\top} \tilde{w}_i \eta\preceq R_u \label{vertexc}
\end{align}
\end{subequations}
with $i=i_1,i_2,\dots,i_v$.\\
\indent Considering the fact $\{\tilde{w}_{i_1}, \tilde{w}_{i_2},..., \tilde{w}_{i_v}\} \in \mathcal{N}$, it is easy to show $\mathcal{R}_1 \subseteq \mathcal{R}_2$. Now we need to prove $\mathcal{R}_1 \supseteq \mathcal{R}_2$.\\ 
\indent 1) For the generator capacity constraints (\ref{sab},\ref{vertexb}) and reserve constraints (\ref{sac},\ref{vertexc}), their corresponding constraints region $\mathcal{R}_1(\text{b,c}) \supseteq \mathcal{R}_2(\text{b,c})$ is equal to:
\begin{equation} 
\begin{array}{c}
-\mathbf{1}^{\top} \tilde{w}_j  \leq \text{max} \{-\mathbf{1}^{\top} \tilde{w}_{i_1}, \dots ,-\mathbf{1}^{\top} \tilde{w}_{i_v}   \}\\
-\mathbf{1}^{\top} \tilde{w}_j  \geq \text{min} \{-\mathbf{1}^{\top} \tilde{w}_{i_1}, \dots ,-\mathbf{1}^{\top} \tilde{w}_{i_v}\}\\
\end{array}
\end{equation}
for any $j \in \{1,2,...,N\}$.\\
\indent Note that $\{\tilde{w}_1,\dots,\tilde{w}_N\}$ is the convex combination of $\{\tilde{w}_{i_1},\dots,\tilde{w}_{i_v}\}$, which means any $\tilde{w}_j$ with $j \in \{1,2,...,N\}$ can be expressed as: 
\begin{equation} \label{combination}
\begin{aligned}
&\tilde{w}_j=\theta_{j_1}\tilde{w}_{i_1}+\theta_{j_2}\tilde{w}_{i_2}+...+\theta_{j_v}\tilde{w}_{i_v} \\
&\theta_{j_1} + \theta_{j_2} +...+\theta_{j_v}=1\\
&\theta_{j_1}, \theta_{j_2},...,\theta_{j_v} > 0
\end{aligned}
\end{equation}
Then we have:
\begin{equation}
\begin{aligned}
&-\mathbf{1}^{\top} \tilde{w}_j   = -\mathbf{1}^{\top} (\theta_{j_1}\tilde{w}_{i_1}+\theta_{j_2}\tilde{w}_{i_2}+...+\theta_{j_v}\tilde{w}_{i_v})  \\
=& -(\theta_{j_1}\mathbf{1}^{\top}\tilde{w}_{i_1}+\theta_{j_2}\mathbf{1}^{\top}\tilde{w}_{i_2}+...+\theta_{j_v}\mathbf{1}^{\top}\tilde{w}_{i_v}) \\
\leq & (\theta_{j_1} +...+\theta_{j_v}) \times \text{max} \{-\mathbf{1}^{\top} \tilde{w}_{i_1}, \dots ,-\mathbf{1}^{\top} \tilde{w}_{i_v}\}\\
=&\text{max} \{-\mathbf{1}^{\top} \tilde{w}_{i_1}, \dots ,-\mathbf{1}^{\top} \tilde{w}_{i_v}\}
\end{aligned}
\end{equation}
\indent Similarly, $-\mathbf{1}^{\top} \tilde{w}_j \geq \text{min} \{-\mathbf{1}^{\top} \tilde{w}_{i_1}, \dots ,-\mathbf{1}^{\top} \tilde{w}_{i_v}  \}$, which means we have proved:
\begin{equation} \label{Rbc}
\mathcal{R}_1(\text{b,c}) \supseteq \mathcal{R}_2(\text{b,c})
\end{equation}
\indent 2) For the line capacity constraints (\ref{saa},\ref{vertexa}), because $\mathbf{1} \tilde{w}$ is a scalar, we can extract the $\tilde{w}$ related robust counterpart as below:
\begin{equation}
\begin{aligned}
f(\hat{w}, \tilde{w})&=H_g\left(g-\mathbf{1}_{1 \times n_w} \tilde{w} \eta\right)-H_d d+H_w(\hat{w}+\tilde{w})\\
&= L \tilde{w}+(H_g g-H_d d+H_w\hat{w}) 
\end{aligned}
\end{equation}
where $L_{n_l \times n_w} = H_g\eta\mathbf{1}_{n_g \times n_w}+H_w$.\\
\indent Because $\eta$ is the decision variable, which is unknown before solving the problem. If $\mathcal{R}_1(\text{a}) \supseteq \mathcal{R}_2(\text{a})$ is true, supposing $L(r,:)$ represents the $r$-th row of the matrix $L$, we need to prove that for any $L$, there exists:
\begin{equation}
\begin{aligned}
L(r,:)\tilde{w}_j \leq \text{max} \{L(r,:)\tilde{w}_{i_1},\dots,L(r,:)\tilde{w}_{i_v}\} \\
L(r,:)\tilde{w}_j \geq \text{min} \{L(r,:)\tilde{w}_{i_1},\dots,L(r,:)\tilde{w}_{i_v}\} 
\end{aligned}
\end{equation}
for any $j \in \{1,2,...,N\}$ and $r \in \{1,2,...,n_l\}$.\\
\indent Actually, based on the convex combination result in (\ref{combination}), we can use a similar technique to show that:
\begin{equation}
\begin{aligned}
&L(r,:)\tilde{w}_j   = L(r,:) (\theta_{j_1}\tilde{w}_{i_1}+\theta_{j_2}\tilde{w}_{i_2}+...+\theta_{j_v}\tilde{w}_{i_v})  \\
=& (\theta_{j_1}L(r,:)\tilde{w}_{i_1}+\theta_{j_2}L(r,:)\tilde{w}_{i_2}+...+\theta_{j_v}L(r,:)\tilde{w}_{i_v}) \\
\leq & (\theta_{j_1}  +...+\theta_{j_v}) \times \text{max} \{L(r,:) \tilde{w}_{i_1}, \dots ,L(r,:) \tilde{w}_{i_v}\}\\
=&\text{max} \{L(r,:) \tilde{w}_{i_1}, \dots ,L(r,:) \tilde{w}_{i_v}\}
\end{aligned}
\end{equation}
\indent Similarly, $L(r,:)\tilde{w}_j \geq \text{min} \{L(r,:)\tilde{w}_{i_1},\dots,L(r,:)\tilde{w}_{i_v}\}$ is also true, which means we have proved:
\begin{equation}\label{Ra}
    \mathcal{R}_1(\text{a}) \supseteq \mathcal{R}_2(\text{a})
\end{equation}
\indent Because $\mathcal{R}_1 = \mathcal{R}_1(a) \cap \mathcal{R}_1(b,c)$ and $\mathcal{R}_2 = \mathcal{R}_2(a) \cap \mathcal{R}_2(b,c)$, after combing the results of (\ref{Rbc}) and (\ref{Ra}), it follows that:
\begin{equation}
    \mathcal{R}_1 \supseteq \mathcal{R}_2
\end{equation}
\end{proof}
\begin{corollary}[Invariant Solution Complexity] \label{isc}
For the {\color{black}SCED} scenario problem $\mathrm{SP}(\mathcal{N})$, the solution complexity $s_N^*$ is invariant after replacing the scenarios set $\mathcal{N}$ with its convex hull vertexes $\tilde{w}_{i_1}, \tilde{w}_{i_2},..., \tilde{w}_{i_v}$. 
\end{corollary}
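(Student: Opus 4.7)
The plan builds on Theorem~\ref{vertex-all} and Definition~\ref{supportconst}. Write $V = \{\tilde{w}_{i_1},\ldots,\tilde{w}_{i_v}\}$ and $J(S)$ for the optimal cost of the scenario program restricted to scenarios in $S$. By Theorem~\ref{vertex-all}, $\mathrm{SP}(\mathcal{N})$ and $\mathrm{SP}(V)$ share the feasible region and therefore the optimizer $(g^*,\eta^*)$ and the value $J(\mathcal{N})=J(V)$. Proving invariance of $s_N^*$ thus reduces to showing that the same subset of $V$ satisfies the strict-improvement criterion of Definition~\ref{supportconst} in both problems.

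My first step is to rule out non-vertex scenarios as supports of $\mathrm{SP}(\mathcal{N})$. If $\tilde{w}_j\notin V$, then (\ref{combination}) expresses $\tilde{w}_j$ as a convex combination of the vertices, and the same convex-combination estimates used in the proof of Theorem~\ref{vertex-all} show that the inequalities it induces through (\ref{saa})--(\ref{sac}) are already implied by those induced by the vertices. Hence the $\tilde{w}_j$-constraint is redundant in $\mathrm{SP}(\mathcal{N})$, its removal leaves $J(\mathcal{N})$ unchanged, and it fails Definition~\ref{supportconst}. Every support scenario of $\mathrm{SP}(\mathcal{N})$ therefore lies in $V$, which matches the candidate pool of $\mathrm{SP}(V)$.

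It then remains to show that a vertex $\tilde{w}_{i_k}$ is support in $\mathrm{SP}(\mathcal{N})$ iff it is support in $\mathrm{SP}(V)$. Because $V\setminus\{\tilde{w}_{i_k}\}\subset\mathcal{N}\setminus\{\tilde{w}_{i_k}\}$ as scenario sets, I obtain the chain $J(V\setminus\{\tilde{w}_{i_k}\})\le J(\mathcal{N}\setminus\{\tilde{w}_{i_k}\})\le J(\mathcal{N})=J(V)$. One direction is immediate: if removing $\tilde{w}_{i_k}$ strictly drops the middle term, the left term is forced strictly below $J(V)$ too, so $\tilde{w}_{i_k}$ being support in $\mathrm{SP}(\mathcal{N})$ implies the same for $\mathrm{SP}(V)$.

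The converse is where I expect the main obstacle. A non-vertex of $\mathcal{N}$ that lies outside $\mathrm{conv}(V\setminus\{\tilde{w}_{i_k}\})$ can re-emerge as a vertex of $\mathrm{conv}(\mathcal{N}\setminus\{\tilde{w}_{i_k}\})$ and contribute a fresh constraint that partially substitutes for the removed $\tilde{w}_{i_k}$, so a strict drop on the left end of the chain does not mechanically force one in the middle. I plan to close the gap by exploiting the affinity of (\ref{saa})--(\ref{sac}) in $\tilde{w}$: in the row along which $\tilde{w}_{i_k}$'s constraint binds at $(g^*,\eta^*)$, the per-row tightness of every non-vertex $\tilde{w}_j$ is the same convex combination (\ref{combination}) of vertex tightnesses, and since $\tilde{w}_j\neq\tilde{w}_{i_k}$ forces $\theta_k<1$, its bound is a proper convex combination of vertex bounds and hence strictly looser than that of $\tilde{w}_{i_k}$; invoking the non-degeneracy Assumption to exclude measure-zero coincidences of bounds, deleting $\tilde{w}_{i_k}$ from $\mathrm{SP}(\mathcal{N})$ strictly relaxes the binding constraint and strictly lowers $J$. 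The two implications together give the equivalence of support sets, so $s_N^*$ is invariant under convex-hull compression.
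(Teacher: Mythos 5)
Your proposal is correct, and its first half is exactly the paper's own (two-sentence) argument: by Theorem~\ref{vertex-all}, deleting all non-vertex scenarios leaves the optimum unchanged, so no single non-vertex can satisfy the strict-improvement criterion of Definition~\ref{supportconst}, and every support scenario of $\mathrm{SP}(\mathcal{N})$ lies in $V=\{\tilde{w}_{i_1},\ldots,\tilde{w}_{i_v}\}$. What you add---and the paper silently skips---is the second half: that a vertex is of support in $\mathrm{SP}(\mathcal{N})$ \emph{if and only if} it is of support in $\mathrm{SP}(V)$. This is genuinely needed, since ``the support scenarios are all kept'' only gives an inclusion of the support set of $\mathrm{SP}(\mathcal{N})$ into $V$, not equality of the two support counts. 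Your chain $J(V\setminus\{\tilde{w}_{i_k}\})\le J(\mathcal{N}\setminus\{\tilde{w}_{i_k}\})\le J(\mathcal{N})=J(V)$ disposes of one implication cleanly, and you are right that the converse is the delicate part: in a degenerate configuration it can actually fail. For instance, in a scenario program of the same affine form (minimize $-x_1-x_2$ subject to $x^{\top}\delta_i\le 1$ with $\delta_1=(1,0)$, $\delta_2=(0,1)$ and their midpoint $\delta_3$), the midpoint constraint $x_1+x_2\le 2$ is tight at the optimum $(1,1)$ and blocks any improvement when a vertex is removed from the full set, yet both vertices are of support after compression; here the support count jumps from $0$ to $2$, and the non-degeneracy assumption (Assumption~1) is violated. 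So some appeal to non-degeneracy is unavoidable, and your closing argument---generically every non-vertex constraint, being a \emph{proper} convex combination of vertex values along each row, is strictly slack at $(g^*,\eta^*)$, so a small step toward the optimizer of $\mathrm{SP}(V\setminus\{\tilde{w}_{i_k}\})$ stays feasible for $\mathcal{N}\setminus\{\tilde{w}_{i_k}\}$ and strictly lowers the cost---is the right way to close it. Two small points of rigor: ``strictly looser than the bound of $\tilde{w}_{i_k}$'' requires that the maximizing vertex along the binding row be unique (ties among vertices spanning a common face are precisely the measure-zero coincidences you must exclude), and it is the strict slackness of \emph{all} non-vertex constraints at $(g^*,\eta^*)$, not merely the relaxation of the binding one, that licenses the strict decrease of $J$. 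In short, your route is the paper's route plus a necessary repair of a gap that the paper's proof leaves open.
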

\begin{proof}
Regarding the convex hull compression process as the removal of non-vertex scenarios from $\mathcal{N}$, based on the invariant solution Theorem \ref{vertex-all}, it shows that these non-vertex scenarios are non-support scenarios, i.e. the support scenarios are all kept after compression. Because the solution complexity $s_N^*$ is equal to the number of support scenarios, which means $s_N^*$ is also invariant after convex hull compression.
\end{proof}
{\color{black}Based on Theorem \ref{vertex-all} and Corollary \ref{isc}, the decision maker can simplify the problem formulation and reduce solving time by considering only the convex hull vertices instead of the entire scenario set in the scenario problem $\mathrm{SP}(\mathcal{N})$.}
\subsection{Box Compression}
After convex hull compression, the decision-makers only need to use the vertex scenarios to represent the whole scenario set $\mathcal{N}$ without solution change, but the number of vertex can be large under some uncertainties. Constructing a bigger compressed set with fewer vertexes is a straightforward way to further reduce the problem size.\\
\indent Without loss of generality, we introduce the box compression method which compresses the scenarios into a hyper-rectangle space \cite{margellos2014road}. Supposing $\tilde{w}_i^{(q)} \in [\underline\tau_q,\bar\tau_q]$ means the $q$-th element of the scenario $\tilde{w}_i$, the simple optimization problems (\ref{box}) relax the scenario set $\mathcal{N}$ to hyper-rectangle space $B^*:=\times_{q=1}^{n_w}\left[\underline{\tau}_q^*, \bar{\tau}_q^*\right]$
\begin{equation} \label{box}
    \begin{aligned}
\min _{\underline\tau_q,\bar\tau_q \in \mathbb{R}} & \left(\bar{\tau}_q-\underline{\tau}_q\right) \\
\text { s.t. } & \tilde{w}_i^{(q)} \in\left[\underline{\tau}_q, \bar{\tau}_q\right], \text { for } i=1, \ldots, N
\end{aligned}
\end{equation}
\indent Fig.\ref{compresscompare} illustrates the difference between convex hull compression and box compression. It is clear that the compressed space is relaxed to achieve the box compression shape with fewer vertexes. Because the hyper-rectangle after box compression is a special convex hull and all the scenarios are inside or on the boundaries of this hyper-rectangle, similar in Section \ref{chc}, we can also directly use the vertexes to represent the whole hyper-rectangle space.
\begin{figure}[H]
\centering
  \includegraphics[scale=1]{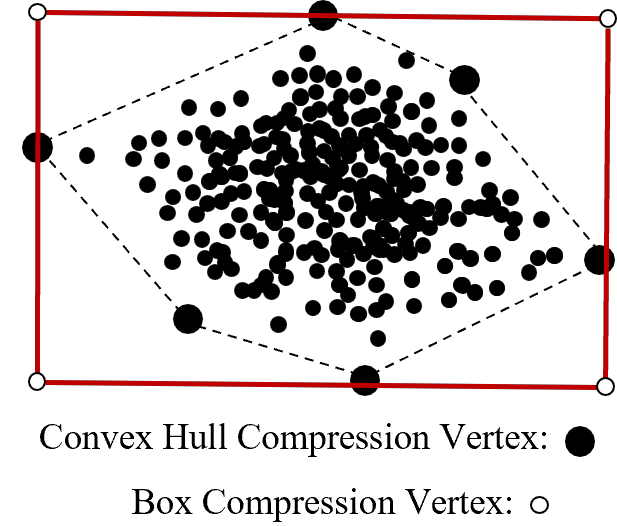}
  \caption{The comparison of convex hull compression (black dotted lines) and box compression (red solid lines)}
  \label{compresscompare}
\end{figure}
\indent Unlike convex hull compression, the vertex in box compression does not necessarily belong to the original scenario set $\mathcal{N}$, which means the exact value of solution complexity $s_N^*$ is impossible to be computed based on hyper-rectangle vertexes. To conquer this barrier, the authors in the previous paper replace $s_N^*$ with its upper bound: $n$ in (\ref{exact}), but it makes the final risk bound too conservative \cite{margellos2014road,vrakopoulou2013probabilistic}. In the next section, we will show that if it is hard to gain a precise solution risk, solving the compression risk may be another accessible way.
\subsection{Compression Risk}
\indent The above scenario compression can be regarded as applying an algorithm $\mathcal{A}$ to construct the corresponding compressed set of the sampled scenarios $\mathcal{N}$, and the compression function $\textbf{c}$ returns the smallest subset of $\mathcal{N}$ which can be used to reconstruct the compressed set. For the convex hull compression, function $\textbf{c}$ maps to the scenarios at the vertex, while for the box compression, function $\textbf{c}$ typically maps to the scenarios on the hyper-rectangle boundaries.\\
\indent Instead of focusing on the risk of the solution in Definition \ref{vp}, the compression risk directly quantifies the probability of a newly drawn scenario that is outside of the compressed set.  
\begin{definition}[Compression Risk \cite{campi2023compression}]\label{cr}
The \textit{compression risk} of the compression algorithm $\mathcal{A}$ is defined as the probability of change of vertexes after adding a new scenario:
\begin{equation} \label{crequ}
\begin{aligned}
&\phi(\mathcal{A}(\tilde{w}_1,...,\tilde{w}_N))  \\
= &\mathbb{P}_{\tilde{w}}\{\textbf{c}(\textbf{c}(\tilde{w}_1,...,\tilde{w}_N),\tilde{w}_{N+1}) \neq \textbf{c}(\tilde{w}_1,...,\tilde{w}_N)\}
\end{aligned}
\end{equation}
\end{definition}
where $\{\tilde{w}_1,...,\tilde{w}_N\}$ is the specific given data set, while $\tilde{w}_{N+1}$ is the future unknown data under distribution $\mathbb{P}_{\tilde{w}}$.
\indent Generally speaking, the compression risk only considers the risk in the sample space, while the solution risk projects the risk from the sample space to the solution space, which seems more useful for decision-makers. Under the scenario approach scheme, the risk in the solution space also means the risk in the sample space, but not vice versa. To illustrate this, suppose we have the simple chance-constrained problem (\ref{simple}):
\begin{equation}\label{simple}
\begin{aligned}
\min _{\theta,J}&\; J   \\
\text { s.t. }&\mathbb{P}_{\delta} \{ \rho(\theta,\delta) \leq J   \} \geq 1-\epsilon'
\end{aligned}
\end{equation}
where $\theta$ is the scalar decision variable, $\delta \in \Delta_\delta$ is the uncertain variable with two dimensions, $\rho$ is a simple convex function, $J$ is both the scalar decision variable and objective function, and $\epsilon'$ is the tolerable risk level.\\
\indent After applying the scenario approach to (\ref{simple}), we have the scenario-based formulation (\ref{simplesa}) if $N$ scenarios are needed.
\begin{equation}\label{simplesa}
\begin{aligned}
\min _{\theta,J}&\; J   \\
\text { s.t. }& \rho(\theta,\delta_i) \leq J  \quad\quad i=1,...,N
\end{aligned}
\end{equation}
\indent Assuming the solution of Equation (\ref{simplesa}) remains unchanged following convex hull compression, Fig.\ref{RC} illustrates the difference between the Definition \ref{vp} solution risk and Definition \ref{cr} compression risk. Any newly drawn scenario outside of the convex hull area is regarded as a risk of the compression process, but only part of these scenarios will cause violations in the solution space. For example, the red scenario $\delta_{N+1}'$ and the green scenario $\delta_{N+1}''$ are both viewed as a risk in compression, but only the red scenario $\delta_{N+1}'$ results in solution risk.
\begin{figure}[H]
\centering
  \includegraphics[scale=0.45]{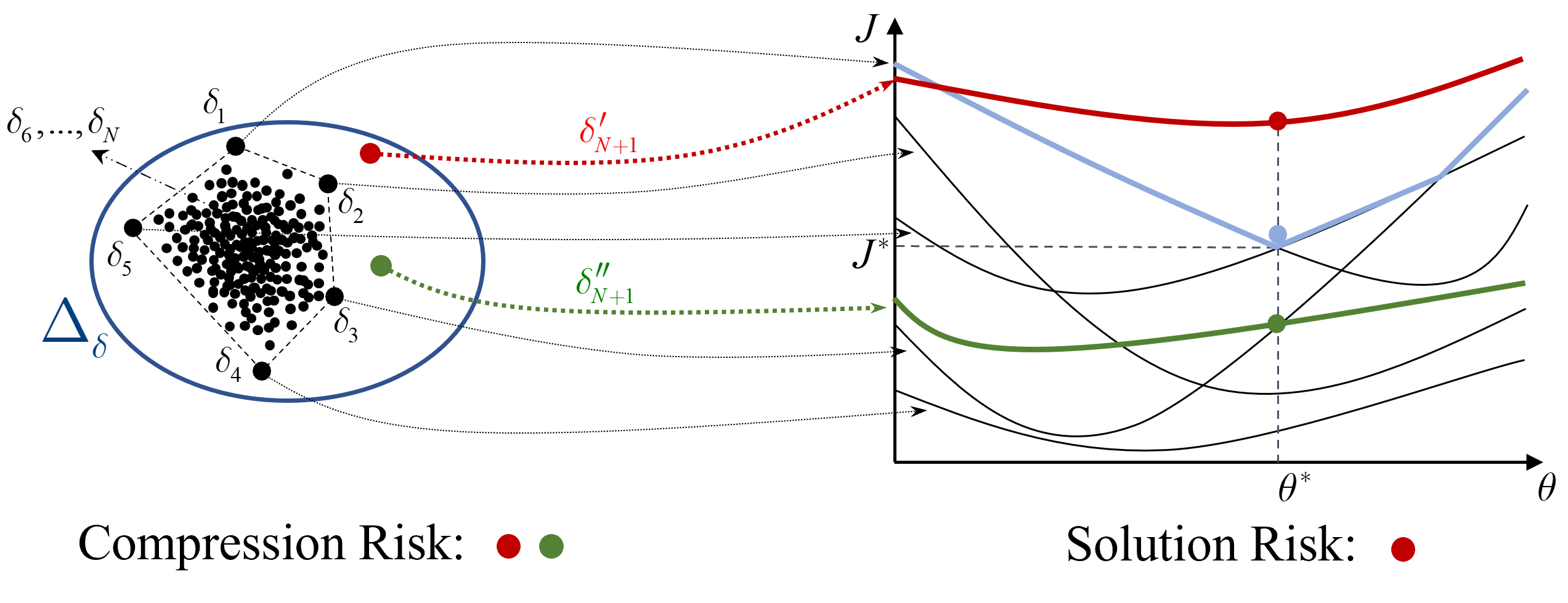}
  \caption{The comparison of (convex hull) compression risk and solution risk}
  \label{RC}
\end{figure}
\indent In Fig.\ref{RC}, the number of vertexes after compression is 5, while solution complexity $s_N^*$ is 2 equal to the number of decision variables. That means for the problem (\ref{simple}), it is easy to quantify the risk after projecting it from sample space to solution space.\\
\indent However, when the optimization problem is complex, it may be hard to compute solution complexity $s_N^*$ or the value of $s_N^*$ is very large. For a convex problem, the decision maker can replace $s_N^*$ with its upper bound, i.e. the number of decision variables $n$, but it is proved to be impractical in the {\color{black}SCED} problem, where the solution complexity $s_N^*$ is $3-5$ while the decision variables $n$ is 864 in \cite{modarresi2018scenario}.\\
\indent On the other hand, because of many mature compression algorithms, it is highly possible to quantify the compression risk bound from the sample space. Thanks to a recent theory breakthrough in \cite{campi2023compression}, the \textit{unprecedentedly} tight risk bounds based on the cardinality of the compressed data set are proved for the many compression processes. This result without requiring solution complexity $s_N^*$ is extremely useful when the sample space is simpler than the solution space.   
\begin{property}[Non-concentrated Mass] \label{nomass}
If the random variable $\boldsymbol{z} \in \mathcal{Z}$ satisfies:
\begin{equation}
\mathbb{P}\left\{\boldsymbol{z}=z\right\}=0, \forall z \in \mathcal{Z}
\end{equation}
then we say the random variable $\boldsymbol{z}$ has a non-concentrated mass property, which excludes with probability 1 that the same $z$ occurs twice or more times in a training set.
\end{property}
\begin{property}[Non-associativity] \label{noasso} 
For any $q \geq 1$, there exists
\begin{equation}
\mathbb{P}(E_1 	\setminus E_2)=0    
\end{equation}
where
\begin{equation}
\begin{aligned}
E_1=\{\textbf{c}(\boldsymbol{z}_1,\dots,\boldsymbol{z}_N,\boldsymbol{z}_{N+i})=\textbf{c}(\boldsymbol{z}_1,\dots,\boldsymbol{z}_N),i=1,\dots,q\} \\
E_2=\{\textbf{c}(\boldsymbol{z}_1,\dots,\boldsymbol{z}_N,\boldsymbol{z}_{N+1},\dots,\boldsymbol{z}_{N+q})=\textbf{c}(\boldsymbol{z}_1,\dots,\boldsymbol{z}_N)\}
\end{aligned}
\end{equation}
which means if the compression does not change after adding each new scenario one time, then it does not change by adding all new scenarios together.
\end{property}
\begin{property}[Preference] \label{pref}
For any multisets $U$ and $V$ such that $V \subseteq U$, if $V \neq \textbf{c}(U)$, then $V \neq \textbf{c}(U,z)$ for all $z \in \mathcal{Z}$. 
\end{property}

\begin{definition}[Compression Complexity] \label{comprecomple}
The compression complexity $k_c$ is defined as the minimum number of scenarios required to construct the compression set.
\end{definition}
Based on Definition \ref{comprecomple}, the compression complexity of both convex hull and box compression methods can be computed.
\begin{corollary}[Compression Complexity of Convex Hull Compression] \label{hullcomplex}
The compression complexity of convex hull compression is equal to the number of vertexes of the compression set.
\end{corollary}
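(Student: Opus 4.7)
The plan is to establish $k_c = v$ by proving two matching inequalities: the upper bound $k_c \leq v$ (the $v$ vertices already suffice to reconstruct the compressed set) and the lower bound $k_c \geq v$ (no strictly smaller subset of $\mathcal{N}$ reconstructs it). Recall that by Definition \ref{comprecomple} the compression complexity is the cardinality of the \emph{smallest} subset of $\mathcal{N}$ whose image under the compression algorithm $\mathcal{A}$ coincides with $\text{conv}(\mathcal{N})$.

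For the upper bound, I would argue that $\text{conv}(\{\tilde{w}_{i_1},\ldots,\tilde{w}_{i_v}\}) = \text{conv}(\mathcal{N})$. The inclusion $\subseteq$ is immediate since every vertex lies in $\mathcal{N}$. For the reverse inclusion I would invoke the standard polytope fact that every point of a finite-dimensional convex polytope can be written as a convex combination of its extreme points; because Definition \ref{vertexofhull} identifies vertices with exactly the extreme points of $\text{conv}(\mathcal{N})$, each non-vertex scenario $\tilde{w}_j \in \mathcal{N}$ admits a convex representation of the form (\ref{combination}) already used in the proof of Theorem \ref{vertex-all}. Thus the output of $\textbf{c}$ on $\mathcal{N}$ yields the full compressed set using only $v$ scenarios, so $k_c \leq v$.

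For the lower bound I would show that no vertex is dispensable. Fix an index $j$ and suppose, toward contradiction, that $\tilde{w}_{i_j} \in \text{conv}(\{\tilde{w}_{i_k}\}_{k \neq j})$, so that $\tilde{w}_{i_j} = \sum_{k \neq j} \alpha_k \tilde{w}_{i_k}$ with $\alpha_k \geq 0$ and $\sum_k \alpha_k = 1$. By a straightforward peeling-off argument (iteratively grouping the convex combination into pairs, or applying Carath\'eodory one dimension at a time), one reproduces $\tilde{w}_{i_j}$ as $t\tilde{w}_x + (1-t)\tilde{w}_y$ for two distinct scenarios $\tilde{w}_x, \tilde{w}_y \in \mathcal{N}$ and some $0 < t < 1$, which directly contradicts Definition \ref{vertexofhull}. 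Consequently, any subset of $\mathcal{N}$ whose convex hull equals $\text{conv}(\mathcal{N})$ must contain every vertex, giving $k_c \geq v$ and completing the proof.

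The main obstacle, though modest, is bridging the gap between Definition \ref{vertexofhull}, which only forbids writing a vertex as a convex combination of \emph{two} other scenarios, and the stronger statement that a vertex cannot be a convex combination of \emph{any} number of other scenarios. This step requires a small inductive convex-geometry argument on the number of terms in the combination, but no deeper machinery is needed, so I expect the corollary to follow cleanly.
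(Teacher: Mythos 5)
Your proof is correct in spirit and substantially more detailed than the paper's, which disposes of the corollary in two sentences: it observes only that removing any vertex alters the compression set (your lower bound) and treats the upper bound as implicit. Your two-inequality decomposition, with the explicit appeal to $\mathrm{conv}(\{\tilde{w}_{i_1},\ldots,\tilde{w}_{i_v}\})=\mathrm{conv}(\mathcal{N})$ via (\ref{combination}) for $k_c\le v$ and the indispensability of each vertex for $k_c\ge v$, is the same core idea carried out rigorously, and it buys a cleaner link to Definition \ref{comprecomple} (``minimum number of scenarios'') than the paper provides.

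The one place you should not wave your hands is precisely the step you label a ``modest'' obstacle. The peeling-off argument does \emph{not} deliver what you claim: grouping $\sum_{k\ne j}\alpha_k\tilde{w}_{i_k}$ into $t\tilde{w}_x+(1-t)\tilde{w}_y$ leaves $\tilde{w}_y$ as a convex combination of scenarios, which need not itself belong to $\mathcal{N}$, so no contradiction with Definition \ref{vertexofhull} is reached. Indeed, for finite point sets the literal condition ``not between two other points of $\mathcal{N}$'' is strictly weaker than being an extreme point of $\mathrm{conv}(\mathcal{N})$: take $\mathcal{N}=\{(0,0),(1,0),(0,1),(\tfrac13,\tfrac13)\}$; the centroid lies in the convex hull of the other three points yet on no segment joining two of them, so it is a ``vertex'' under Definition \ref{vertexofhull} but is dispensable for reconstructing the hull. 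The fix is not an induction on the number of terms but a strengthening of the definition: take ``vertex'' to mean $\tilde{w}_v\notin\mathrm{conv}(\mathcal{N}\setminus\{\tilde{w}_v\})$ (equivalently, an extreme point of $\mathrm{conv}(\mathcal{N})$, which is what any convex hull routine returns). Under that reading your lower bound is immediate and the corollary holds; under the paper's literal definition the statement itself can fail. This is a defect inherited from the paper, not introduced by you, but your write-up should name it rather than promise an inductive bridge that does not exist.
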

\begin{proof}
According to Definition \ref{vertexofhull}, removing any convex hull vertex will alter the shape of the compression set. Consequently, the minimum number of scenarios required to construct the compression set is equal to the number of vertices.
\end{proof}
\begin{corollary}[Compression Complexity of Box Compression] \label{boxcomplex}
The compression complexity of box compression is equal to or less than $2m$, where $m$ is the number of the uncertainty domain dimension.
\end{corollary}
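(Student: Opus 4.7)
The plan is to argue directly from the definition of the hyper-rectangle $B^* = \times_{q=1}^{n_w}[\underline{\tau}_q^*, \bar{\tau}_q^*]$, which shows that the box is completely determined once the $2m$ scalar extremal values $\underline{\tau}_q^*$ and $\bar{\tau}_q^*$ for $q = 1, \ldots, m$ are fixed (with $m = n_w$). The strategy is to exhibit a subset of $\mathcal{N}$ of cardinality at most $2m$ whose box compression equals $B^*$, and then invoke Definition \ref{comprecomple}.

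First I would observe that, since the optimization (\ref{box}) takes the tightest interval containing all samples along coordinate $q$, there exist scenarios $\tilde{w}^{\min}_q, \tilde{w}^{\max}_q \in \mathcal{N}$ achieving $\tilde{w}^{\min\,(q)}_q = \underline{\tau}_q^*$ and $\tilde{w}^{\max\,(q)}_q = \bar{\tau}_q^*$, respectively. Collecting these over all dimensions gives the candidate set
\begin{equation}
\mathcal{V} := \{\tilde{w}^{\min}_q : q = 1,\ldots,m\} \cup \{\tilde{w}^{\max}_q : q = 1,\ldots,m\},
\end{equation}
whose cardinality is at most $2m$. Next I would verify that applying the box-compression algorithm $\mathcal{A}$ to $\mathcal{V}$ reproduces $B^*$: along each coordinate $q$, the extremes over $\mathcal{V}$ are at least as tight as $[\underline{\tau}_q^*, \bar{\tau}_q^*]$ by construction, and since $\mathcal{V} \subseteq \mathcal{N}$ the extremes cannot be tighter, so equality holds in every dimension. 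Therefore $\mathcal{V}$ is sufficient to reconstruct $B^*$, and by Definition \ref{comprecomple} we conclude $k_c \le 2m$.

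Finally I would comment on why the inequality can be strict: a single scenario may simultaneously realize the extremum along several coordinates (for instance, a sample that is both coordinate-wise minimal in one dimension and maximal in another), so the set $\mathcal{V}$ may contain duplicates and have cardinality strictly less than $2m$. There is no serious obstacle here; the only subtle point is ensuring that the selection of $\tilde{w}^{\min}_q$ and $\tilde{w}^{\max}_q$ is well-defined when multiple scenarios tie at an extremum, which is handled by picking any one such scenario per coordinate, as this choice does not affect the resulting box.
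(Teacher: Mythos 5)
Your proof is correct and follows essentially the same idea as the paper's: the box is determined by its $2m$ extremal coordinate values, each attained by at least one scenario touching the corresponding face, so at most $2m$ scenarios reconstruct $B^*$. Your version is in fact more explicit than the paper's (which detours through the $2^m$ vertices of the box before asserting the $2m$ bound), and your remark about a single scenario being extremal in several coordinates correctly explains why the inequality can be strict.
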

\begin{proof}
In contrast to convex hull compression, the box compression set has precisely $2^m$ vertices. These vertices are derived from the original scenarios that touch with the box. This implies that the minimum number of scenarios required to define the box set is equal to or less than $2m$.
\end{proof}

\begin{theorem}[Compression Risk Bounds \cite{campi2023compression}] \label{crb}
Give a compression algorithm $\mathcal{A}$, and the compression function $\textbf{c}$ returns the smallest subset of the input data. If the probability by which the scenarios are drawn has no concentrated mass (Property \ref{nomass}), and the compression function $\textbf{c}$ satisfies Property \ref{noasso} and Property \ref{pref}, then we have the upper and lower bounds for the compression risk:
\begin{equation} \label{ulb}
\mathbb{P}\left\{\underline{\varepsilon}(k_c) \leq \phi \left(\mathcal{A}\left(\tilde{w}_1, \ldots, \tilde{w}_N\right)\right) \leq \bar{\varepsilon}(k_c)\right\} \geq 1-\beta
\end{equation}
where $k_c=\left|\textbf{c}\left(\tilde{w}_1, \ldots, \tilde{w}_N\right)\right|$ while $\underline{\varepsilon}(k_c)$ and $\bar{\varepsilon}(k_c)$ has the same definition as Theorem \ref{theorem1}.
\end{theorem}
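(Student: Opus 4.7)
The plan is to mirror the proof of Theorem \ref{theorem1} from \cite{garatti2022risk}, because the present theorem outputs the same polynomial equation and the same quantile functions $\underline{\varepsilon},\bar{\varepsilon}$; only the integer argument changes from the solution complexity $s_N^*$ to the compression complexity $k_c$. The work therefore reduces to verifying that $k_c$ inherits the abstract combinatorial and exchangeability structure that the Campi--Garatti argument exploits for $s_N^*$. Properties \ref{nomass}--\ref{pref} are tailored to provide exactly this dictionary: non-concentrated mass (Property \ref{nomass}) removes measure-theoretic ties; the preference property (Property \ref{pref}) is the analogue of the non-degeneracy assumption used for support scenarios (once a subset fails to be a compression of $U$, appending more data cannot rescue it); and non-associativity (Property \ref{noasso}) lets me decompose the event $\textbf{c}(\tilde{w}_1,\dots,\tilde{w}_{N+q})=\textbf{c}(\tilde{w}_1,\dots,\tilde{w}_N)$ into an intersection of single-addition events, so that $\phi$ in Definition \ref{cr} can be reinterpreted as a genuine product-measure ``violation probability.''

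With the dictionary in place, I would work on the product space $\mathcal{Z}^{N+m}$ and, for each $k\in\{0,1,\dots,n\}$, express the bad event $\{k_c=k\}\cap\{\phi\notin[\underline{\varepsilon}(k),\bar{\varepsilon}(k)]\}$ in terms of the size-$k$ subsets of $\{1,\dots,N\}$. By exchangeability of the i.i.d.\ draws, every such subset plays a symmetric role, so $\binom{N}{k}$ emerges naturally as an enumeration factor. The inner probability then reduces to a Binomial-type tail estimate: the chance that the remaining $N-k$ (or, in an auxiliary step, up to $4N-k$) i.i.d.\ points land in the complement of the compressed set, which has measure at most $t$. Collecting these estimates across all $k$ and requiring that the resulting union bound stays below $\beta$ produces the two polynomial sums in the statement, split into a near range $k\le i\le N-1$ with weight $\beta/(2N)$ from standard Binomial upper-quantile control, and a far range $N+1\le i\le 4N$ with weight $\beta/(6N)$ from a Chebyshev-type refinement using an auxiliary sample of size $3N$.

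The hard part will be reproducing that last refinement for the upper tail. The weighting $\beta/(6N)$ and the summation cap $4N$ are artifacts of a delicate order-statistic calculation in \cite{garatti2022risk}, and I must verify that every appeal to exchangeability or order statistics there survives the translation from support scenarios to compression scenarios. Non-associativity bears the weight of this verification: it ensures that even when $3N$ auxiliary scenarios are appended, a ``compression change'' event is determined by single-scenario additions and therefore has the Bernoulli-type structure the concentration argument requires. Once this abstract check goes through uniformly in $k$, the two polynomial sums appear mechanically and Theorem \ref{crb} follows.
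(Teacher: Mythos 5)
First, be aware that the paper does not prove Theorem \ref{crb} at all: it is imported verbatim from \cite{campi2023compression}, so there is no in-paper argument to compare yours against. The only proof-like content the authors supply is the citation plus the remark that convex hull and box compression satisfy Properties \ref{nomass}--\ref{pref}. Judged on its own terms, your proposal gets the architecture of the Campi--Garatti machinery right: the dictionary between support scenarios and compression sets, the reading of Property \ref{pref} as a permanence/non-degeneracy surrogate, and the observation that Property \ref{noasso} is what turns the event ``the compression is unchanged after appending $q$ fresh points'' into an intersection of single-addition events, so that conditionally on the sample its probability becomes $(1-\phi)^q$. That is the correct skeleton and it is exactly why those three properties are the hypotheses of the theorem.

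The gap is in the second half. You derive the bounds via a union bound over the $\binom{N}{k}$ candidate subsets followed by a binomial tail estimate. That route reproduces only the classical one-sided, conservative bounds of the Calafiore--Campi type; it cannot produce the lower bound $\underline{\varepsilon}(k_c)$ at all (a union bound only ever shows the risk is unlikely to be \emph{large}), and it does not give the tight upper bound either. In \cite{garatti2022risk} and \cite{campi2023compression} the polynomial equation arises from a different mechanism: the quantities $\mathbb{E}\left[(1-\phi)^{m-N}\,;\,k_c=k\right]$, for auxiliary sample sizes $m$ ranging from $N$ up to $4N$, satisfy a family of exact moment identities (this is where non-associativity and exchangeability enter), and $\underline{\varepsilon}(k)$, $\bar{\varepsilon}(k)$ are obtained by solving an extremal problem over all joint distributions of $(\phi,k_c)$ consistent with those identities; the two roots of the polynomial are the endpoints of the resulting feasible interval at confidence $1-\beta$, and the weights $\beta/(2N)$ and $\beta/(6N)$ come from the confidence-budget split in that optimization, not from a Chebyshev refinement of a tail bound. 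Your sketch explicitly defers this step (``the hard part will be reproducing that last refinement''), so as written it is a plan for a proof rather than a proof. For the purposes of this paper the honest resolution is the one the authors take: cite \cite{campi2023compression} for the bound and confine the new work to verifying Properties \ref{nomass}--\ref{pref} for the specific compression maps used (Corollaries \ref{hullcomplex} and \ref{boxcomplex} then supply $k_c$).
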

\indent The key difference between Theorem \ref{theorem1} and Theorem \ref{crb} is that the $s_N^*$ in (\ref{exact}) is number support scenarios based on solution, while $k_c$ in (\ref{ulb}) is the compression complexity of the compressed data set. {\color{black}As illustrated in Fig. \ref{RC}, the solution risk guarantee offers a tighter risk bound compared to the compression risk. If a more accurate risk metric (bound) is the decision maker's top priority, the solution risk is the preferable choice. However, if minimizing computing time is paramount, then the compression risk might become the better option. The balancing criteria can be designed to include indexes that account for both risk conservative and computation time.}\\
\indent \textit{Remark}: It is easy to check that the convex hull compression and box compression meet the requirement of Theorem \ref{crb}. If the decision maker wants to use other advanced learning-based compression algorithms, except for satisfying Property 1-3, the \textit{coherence} property of the loss function is also needed to get the risk bounds (\ref{ulb}) \cite{campi2023compression}. 

\subsection{Time Complexity}
\indent \textit{1) Convex Hull Compression}\\
\indent After convex hull compression, the number of constraints for scenario-based optimization is reduced from $N$ in (\ref{saaa}) to $v$ in (\ref{vertex}), which saves both time and computer memory in the problem formulation part. However, this trade-off process also sacrifices the computation time in finding the convex hull.\\
\indent Luckily, algorithms that construct convex hulls have been well-studied in the past decades. In the planar case, many algorithms have log-linear time complexity $O(N\log{N})$ \cite{de2000computational}, such as the Graham scan, Quickhull, and Chan's algorithm. For high dimension situations, the time complexity may increase to $O(N\log{N}+N^{\llcorner \frac{m}{2} \lrcorner})$ \cite{chazelle1993optimal}, {\color{black}which implies as the dimensionality of the uncertainty set increases, constructing a convex hull set becomes more time-consuming. To illustrate this, consider several independent uncertainty variables, each following a normal distribution with the same mean $\mu=0$ and standard deviation $\sigma = 0.02$, Figure \ref{convextime} compares the time required for convex hull compression across different dimensions.} 
\begin{figure}[H]
\centering
  \includegraphics[scale=0.5]{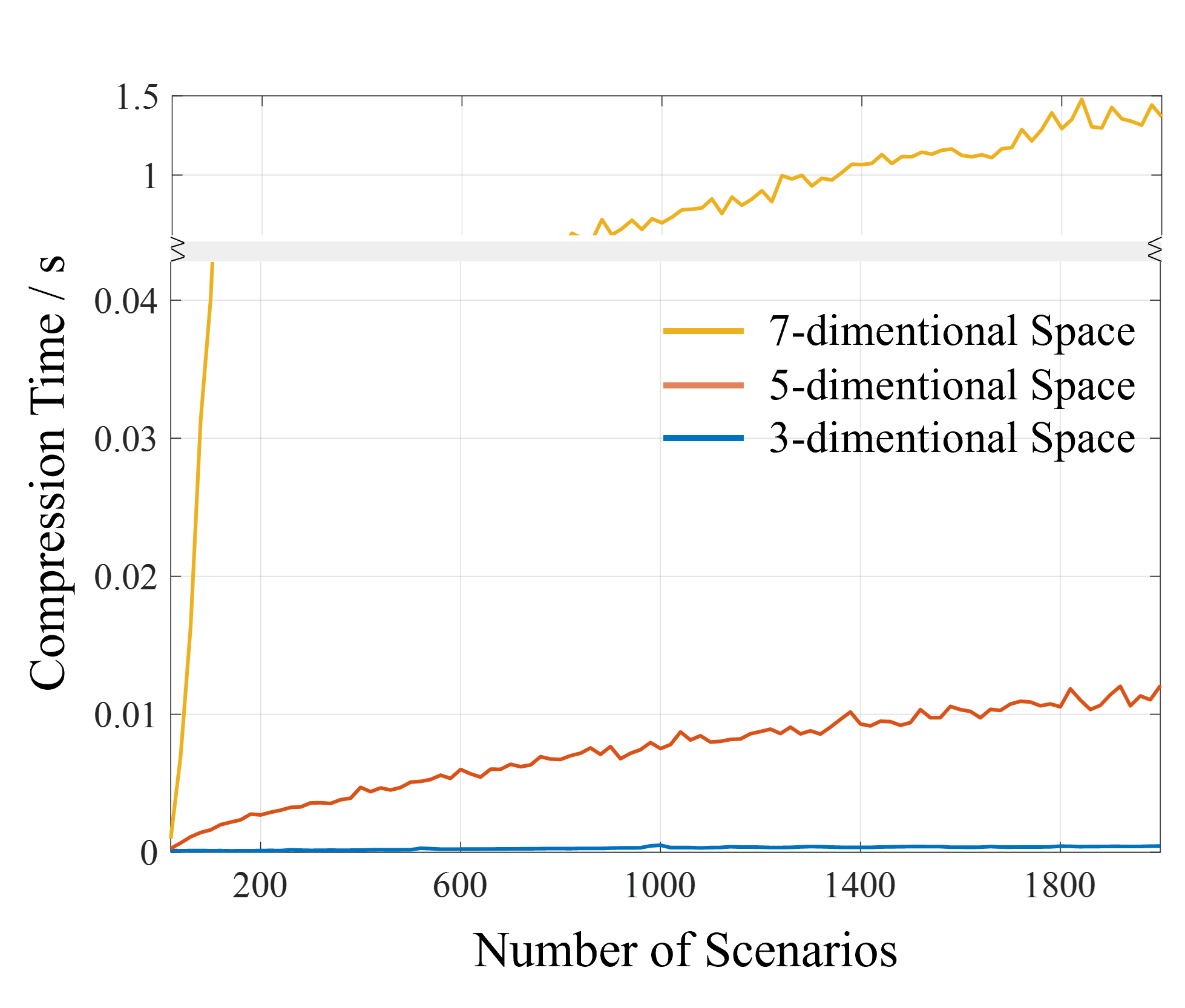}
  \caption{{\color{black}The convex hull compression time under different dimensionality}}
  \label{convextime}
\end{figure}
\indent {\color{black}However, in reality, elements within a scenario often exhibit correlations, which can significantly reduce solving time after model reduction is applied \cite{barber1996quickhull}.} Especially, if these elements have a linear correlation with each other, they can be projected to lower dimensional space by affine transformation.
\begin{definition}[Affine Transformation]
An affine transformation $F:\mathbb{R}^{n_a}\rightarrow\mathbb{R}^{m_a}$ is a map of the form $F(\mathbf{x})=A(\mathbf{x})+\mathbf{b}$ where $\mathbf{b}\in\mathbb{R}^{m_a}$ is a fixed vector and $A$ is an invertible linear transformation of $\mathbb{R}^{n_a}$.
\end{definition}
{\color{black}Let $C\subset\mathbb{R}^{n_a}$ be a convex set, then set $F(C)$ after affine transformation is also a convex set and $F$ maps the vertexes of $C$ onto the vertexes of $F(C)$ \cite{sutherland2009introduction}. Taking a 3-dimension random variable as an example, suppose two of its coordinate directions value has a linear relationship. Based on this property, we can generate these vertexes from a 2-dimension space rather than from the 3-dimension space, which is illustrated in Fig.\ref{proj}.}\\
\indent \textit{Remark}: Indeed, all supporting scenarios in Definition \ref{supportconst} are vertexes of convex hull. Hence, the solution complexity cannot be reduced by using convex hull compression, and the vertexes are usually more than the solution complexity. However, convex hull compression allows decision-makers to reduce the number of constraints in optimization and, therefore, leads to saving computational time when convex hull construction is easier than finding support scenarios, which is typically true in large-scale power system problems.
\begin{figure}[H]
\centering
  \includegraphics[scale=0.25]{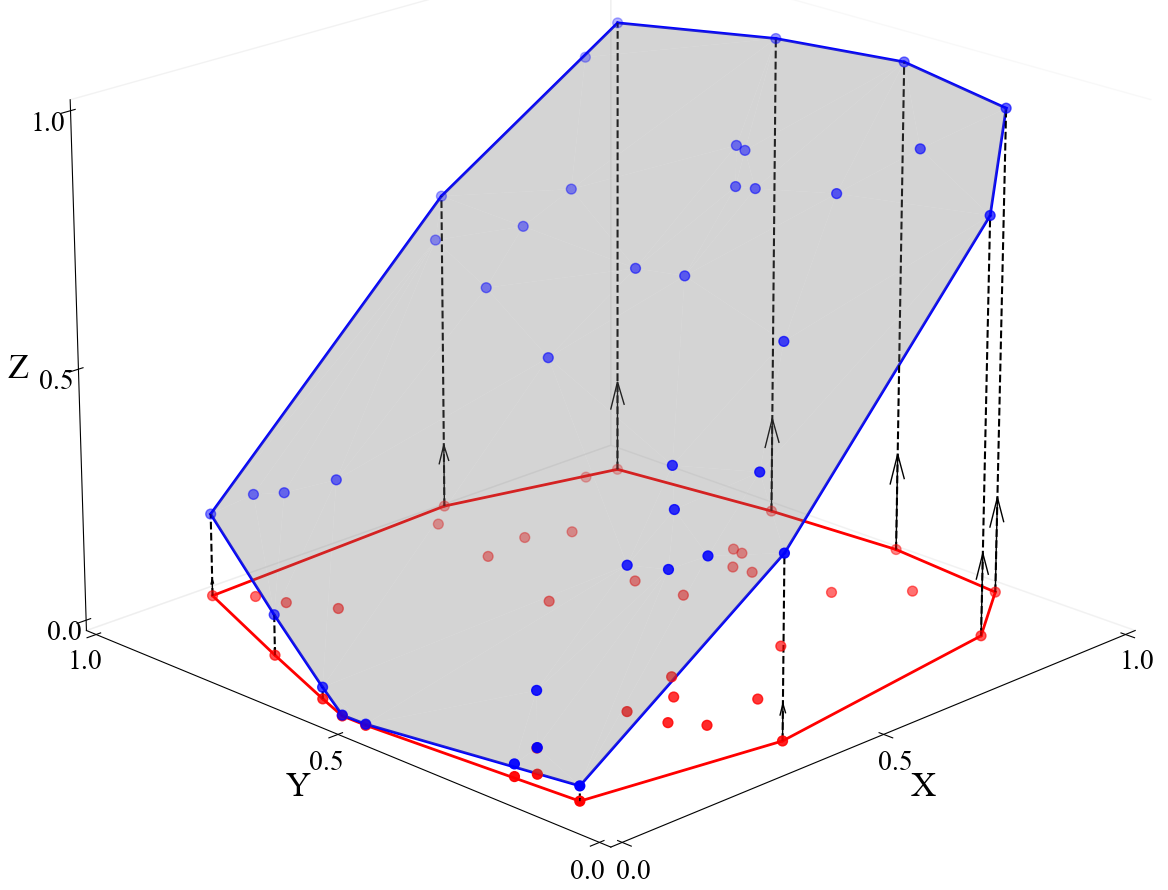}
  \caption{The simple illustration of affine transformation from black points to red points, where the black random variables are uniformly distributed in [0, 1] of each coordinate direction but with the same value in X-direction and Z-direction.}
  \label{proj}
\end{figure}
\indent \textit{2) Box Compression} \\
\indent {\color{black}Box compression can be conducted after convex hull compression, which extremely reduces the $N$ scenarios to $2^m$ scenarios. Meanwhile, directly constructing the box set from the original scenario set is faster, which equals to finding the minimum and maximum value of the scenario's dimension, whose time complexity is $O(mN)$.}\\
\indent The number of scenarios after compression $2^m$ might be a big value in a high-dimension case. Except for choosing a similar projection method when some elements of the scenario have a correlation, there are some other tractable reformulations that can be used to reduce the problem size. For example, if the generations cost $c(g)$ is a linear function and there are $n_c$ constraints in the chance-constraints part (\ref{subeqn:risk}), based on the results in \cite{bertsimas2006tractable}, the $n_c \times 2^m$ constraints after box compression can be further reduced to $n_c(2m+1)$ after adding $n_c(m+1)$ decision variables.

\section{Case Study} \label{sec:case}
\indent The similar forecasting error pattern of the spatial near wind farms is a well-recognized phenomenon \cite{xie2010wind,xie2013short}. For example, the Electric Reliability Council of Texas (ERCOT) divides wind farms into five geographical regions in which climatological characteristics are similar for all areas within such region, and the regional forecasting results are updated every five minutes online \cite{ercot}. To simplify the simulation, the relative forecasting error value of each wind farm from the same area is regarded as the same. In reality, the operators may use linear correlation to model them based on more detailed data, which makes the compression process faster.\\
\indent All the problems are solved using 64 GB memory on the Intel XEON CPU. The mathematical models were formulated using YALMIP on Matlab R2023a and solved using Gurobi v9.5. 
\subsection{118-bus System}
The 118-bus system is based on the test case \textit{c118swf.m} in MATPOWER \cite{murillo2013secure}. The grid includes 4 areas, 118 nodes, 210 lines, and 52 generators, 11 of which are modeled as wind farms located in 3 areas with 5, 4, and 2 farms in Area 1, Area 2, and Area 3 respectively. To address the influence of wind uncertainty on economic dispatch, the storage units are turned off during the simulation.\\
\indent  Suppose the wind relative forecasting errors have Normal distribution with mean $\mu=0$, while standard deviation $\sigma = 0.02$, $\sigma = 0.04$, and $\sigma = 0.06$ in Area 1,2, and 3 respectively. The absolute forecasting error scenarios of each wind farm can be easily affine projected from these three random variables by multiplying with their deterministic forecasting result $\hat{w}$, which is set to 0.25 of each wind farm's capacity in our simulation.\\
\indent After generating 500 independent scenarios extracted from the above uncertainty set, we apply the conventional scenario approach, convex hull compression, and box compression to solve the chance-constrained {\color{black}SCED} problem (\ref{cco}). {\color{black}Because of the randomness property of the generated data, the solution complexity $s_N^*$ might be slightly different for different scenario sets. To avoid this variation, complexity and risk are based on the average solution complexity and risk value of 100 times independent experiment.}\\
\indent {\color{black}We first evaluate the solution risk bounds provided by Theorem \ref{theorem1} against those from previous literature, as detailed in TABLE \ref{compsr}. The results demonstrate that the solution risk bounds from Theorem \ref{theorem1} are less conservative compared to other methods.}
\begin{table}[H]
\caption{{\color{black}The comparison of solution risk bounds}}
\centering
\label{compsr}
\begin{tabular}{{>{\color{black}}c >{\color{black}}c>{\color{black}}c}}
\toprule
$N=500$, $\beta=0.001$         & \textbf{Complexity} & \textbf{Solution Risk}      \\ \midrule
\textbf{Theorem 1} & 2.25       & {[}0, 0.031{]} \\
\textbf{\cite{campi2018wait,modarresi2018scenario}}   & 2.25       & {[}0, 0.029{]} \\
\textbf{\cite{campi2011sampling,ming2017scenario}}    & 2.25       & {[}0, 0.019{]}\\  \bottomrule
\end{tabular}
\end{table}
\indent {\color{black}Similarly, the compression risk bounds from Theorem \ref{crb} are compared and found to be less conservative than those from other approaches, as presented in TABLE \ref{compcr}.}
\begin{table}[H]
\caption{{\color{black}The comparison of box compression risk bounds}}
\centering
\label{compcr}
\begin{tabular}{{>{\color{black}}c >{\color{black}}c>{\color{black}}c}}
\toprule
$N=500$, $\beta=0.001$         & \textbf{Complexity} & \textbf{Compression Risk}   \\ \midrule
\textbf{Theorem 3} & 6          & {[}0, 0.045{]} \\
\textbf{\cite{margellos2014road,margellos2015connection}}    & 6          & {[}0, 0.033{]}\\  \bottomrule
\end{tabular}
\end{table}
\indent {\color{black}Utilizing the most accurate risk bounds available—those from Theorem \ref{theorem1} and Theorem \ref{crb}—the differences between solution risk and compression risk are highlighted in TABLE \ref{result_118bus}.} Compared with the conventional scenario approach, the convex hull compression saves over 10 times less time for problem formulation and solving without changing the solution. Meanwhile, box compression is shown to be the most efficient method, but it changes the original solution which means a higher dispatch cost. As illustrated in Fig.\ref{RC}, when the decision maker can easily compute the solution complexity $s_N^*$, the solution risk is a more precise risk metric than compression risk, such as the convex hull compression. {\color{black}However, when employing box compression or another method that complicates obtaining the exact solution complexity $s_N^*$, it becomes necessary to compare these methods with alternative compression techniques to estimate the solution risk. For example, box compression is typically considered less risky than convex hull compression.}
\begin{figure}[H]
\centering
  \includegraphics[scale=0.3]{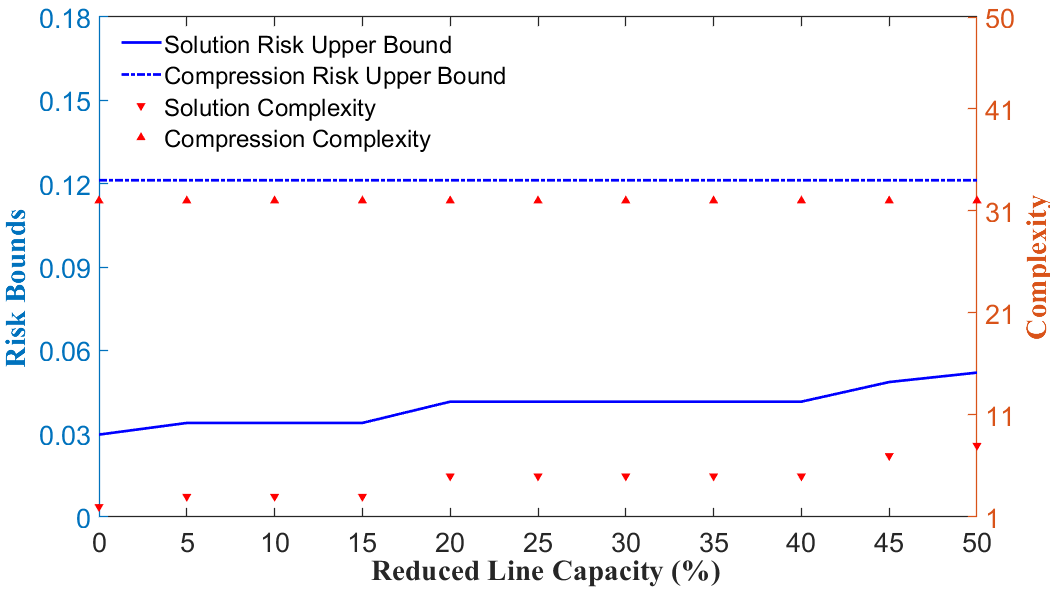}
  \caption{The affection of line capacity to the complexity and risk}
  \label{reducedline}
\end{figure}
\indent The gap between solution risk and compression risk is mainly decided by the complexity of uncertainty in both sample and solution space, which is measured by $k_c$ and $s_N^*$ respectively. As shown in Fig.\ref{reducedline}, after reducing the line capacity of the system in convex hull compression, the value of $s_N^*$ also increases because of more active constraints. However, the complexity of uncertainty in sample space $k_c$ is not affected under the same scenario set, which means the gap between these two risks is narrower when grids have less transmission line capacity or operate under a higher power demand situation.\\
\begin{table*}[htbp]
\caption{The comparison of different compression methods (118-bus System)}
\centering
\label{result_118bus}
\begin{threeparttable}
\begin{tabular}{cccccccc}
\toprule
$N=500$, $\beta=0.001$  & \textbf{Dispatch Cost} & \textbf{$s_N^*$} & \textbf{Solution Risk} & \textbf{$k_c$}      & \textbf{Compression Risk}   & \textbf{Formulating Time}\tnote{*} & \textbf{Solving Time}\tnote{**}   \\\midrule
\textbf{Without Compression}                   & $8.278 \times 10^4 \$ $    & 2.25 & {[}0, 0.031{]}  & / & /                  & 2.426s           & 4.494s         \\
\textbf{Convex Hull Compression}             & $8.278\times10^4 \$ $    & 2.25 & {[}0, 0.031{]} & 32 & {[}0.028, 0.121{]} & 0.162s           & 0.368s        \\
\textbf{Box Compression}                     & $8.279\times10^4 \$ $    & /  &  $<$ 0.031 & 6 & {[}0, 0.045{]} & 0.042s           & 0.141s      \\  \bottomrule
\end{tabular}
\end{threeparttable}
\end{table*}
\begin{table*}[htbp]
\centering
\caption{The comparison of different compression methods (Synthetic Texas Grids)}
\label{result_texas}
\begin{threeparttable}
\begin{tabular}{cccccccc}
\toprule
$N=500$, $\beta=0.001$  & \textbf{Dispatch Cost} & \textbf{$s_N^*$} & \textbf{Solution Risk} & \textbf{$k_c$}      & \textbf{Compression Risk}   & \textbf{Formulating Time}\tnote{*} & \textbf{Solving Time}\tnote{**}  \\\midrule
\textbf{Without Compression}                   & $ 1.0274\times10^6 \$ $    & 2 & {[}0, 0.030{]}  & / & /                  & 47.53s           & 136.60s         \\
\textbf{Convex Hull Compression}             & $1.0274\times10^6 \$ $     & 2 & {[}0, 0.030{]} & 63 & {[}0.071, 0.200{]} & 6.05s           & 16.96s        \\
\textbf{Box Compression}                     & $1.0275\times10^6 \$ $    & /  & {\color{black} $<$ 0.030} & 10 & {[}0.003, 0.059{]} & 3.14s           & 6.84s      \\  \bottomrule
\end{tabular}
\begin{tablenotes}
\item[*] {\color{black}The scenario compression time is included in the Formulating Time.}
\item[**] {\color{black}The practical requirement for the combined formulation and solving time is less than 120 seconds \cite{barry2022risk}.}
\end{tablenotes}
\end{threeparttable}
\end{table*}
\subsection{Synthetic Texas Grid}
\begin{figure}[H]
\centering
  \includegraphics[scale=0.55]{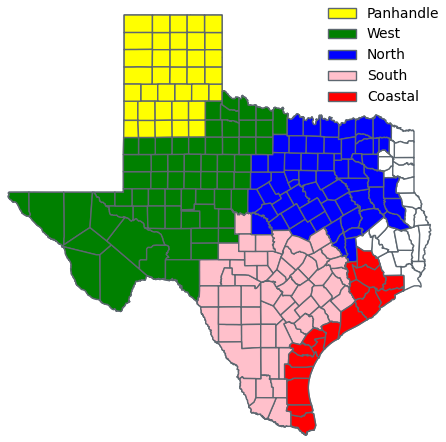}
  \caption{Wind forecasting regions in ERCOT}
  \label{forecastregion}
\end{figure}
\indent We compare different compression methods on a 2000-bus synthetic grid on a footprint of Texas \cite{birchfield2016grid}. This grid consists of 544 generation units, with a portfolio of 367 gas, 39 coal, 4 nuclear, 25 hydro, 87 wind, and 22 utility-scale solar power plants. Nodes with wind farms are where uncertainty exists in this paper, but the results can be generalized to solar and other uncertain sources in the electricity market. The wind power forecasting profiles are scaled from the real recorded data from ERCOT based on the five wind forecasting regions in Fig.\ref{forecastregion}. To avoid repeated simulation, we select the dispatch interval with average wind power in 2022 as the tested case which is 11:00 am on March 19th, whose wind output accounts for $32.5\%$ of the whole wind capacity \cite{ercot}.\\ 
\indent Unlike the assumed normal distribution function in the 118-bus System, the wind forecasting error scenarios for the synthetic Texas grids are directly generated from past experience, which unleashes the distribution-agnostic advantage of the scenario approach. To be precise, the scenario generation method is based on filtering scenarios using ambient wind conditions proposed in \cite{zhang}, which means the scenarios are generated from the past error data with the most similar ambient conditions, such as wind speed and temperature. Fig.\ref{3.19} shows the empirical distribution of the generated 500 scenarios for the studied dispatch interval, which is more precise than a naive normal distribution.
\begin{figure}[H]
\centering
  \includegraphics[scale=0.35]{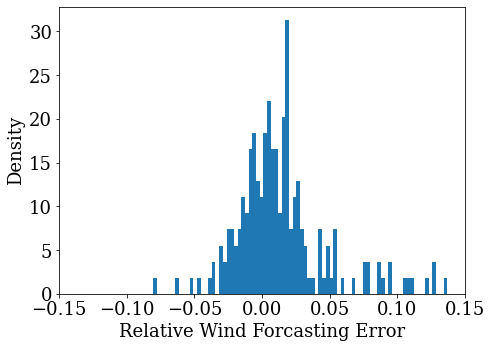}
  \caption{The empirical distribution of the generated wind forecasting error scenarios for the south Texas area on 11:00-11:05 dispatch interval on 3/19/2022}
  \label{3.19}
\end{figure}
\indent Because the synthetic Texas grid includes 3206 transmission lines, directly applying the compressed scenarios to each line constraint still makes the optimization problem consume a large memory and long solving time. To overcome this barrier, we first solve the conventional DC-OPF problem with deterministic forecasting value. Based on the deterministic results, only the line whose power flow exceeds $60\%$ of its capacity will be considered for the chance-constraints problem. After this key line selection process, the number of transmission lines in the scenario approach is reduced to 247 even less than the 118-bus system.\\ 
\indent The transmission line constraint reduction is similar to finding the \textit{active constraint} of the optimization problem, but it needs to make sure the discarded line constraints remain inactive for the uncertain scenarios. Considering the 5-min wind forecasting error is typically not large compared with long-term forecasting, we simply choose $60\%$ of the line capacity as the threshold to pick key line constraints, and the operator can change this threshold value based on the accuracy of the forecasting or the penetration level of wind power. A more precise but complicated method to identify the potential \textit{active constraint} sets can be found in \cite{misra2022learning}.\\
\indent TABLE \ref{result_texas} compares the results of applying different compression methods in synthetic Texas grids. As the system expands, the scenario-based problem formulation and solving time both increase, which in total is 184.13s without compression. Because each dispatch interval is 5 minutes, after including the delay in communication, control, and other factors, the real-time economic dispatch problem must be solved within 2 minutes \cite{barry2022risk}, which is hard to achieve in large scenario-based problems without compression. On the other hand, the problem formulation and solving time are both reduced extremely after convex hull or box compression.\\
\begin{figure}[H]
\centering
  \includegraphics[scale=0.33]{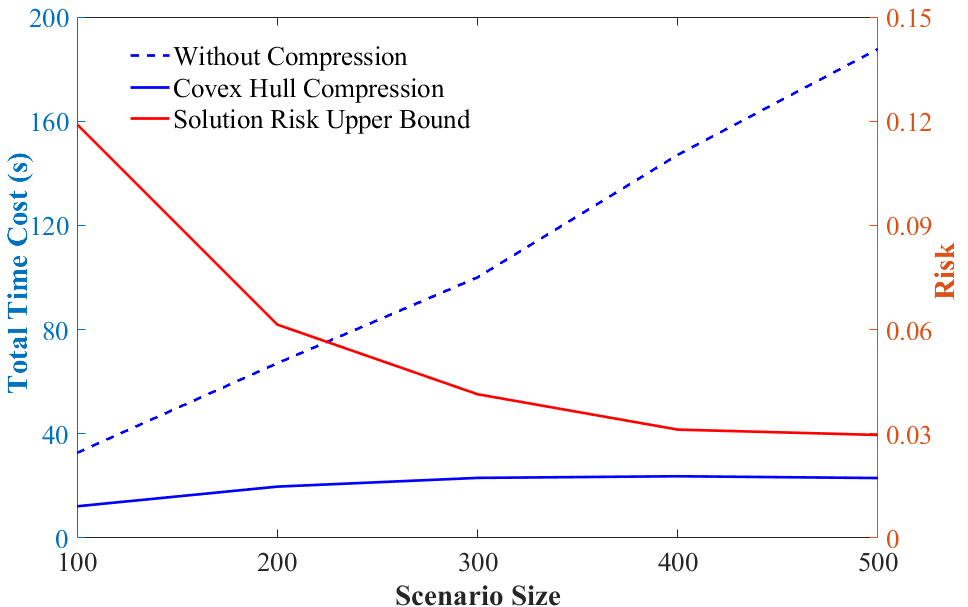}
  \caption{The total time cost based on different scenario sizes in the risk-tuning process}
  \label{risktuning}
\end{figure}
\indent In reality, when the operators are not satisfied with the risk bound after solving the problem, they need to change the number of input scenarios, which means repeating the problem formulation and solving process at least once. Fig.\ref{risktuning} illustrates the total time cost in the risk-tuning process based on different sample sizes. It is clear that the time gap between with and without convex hull compression is increasing as more scenarios are input to achieve a lower risk guarantee. As the operators tune the risk more times, the benefit of time-saving after scenario compression is also magnified.
  
\section{Conclusion} \label{sec:conclusion}
In this paper, scenario compression methods are proposed to speed up the chance-constrained {\color{black}SCED} problem. The convex hull compression using only vertices yields an equivalent solution to using the entire scenario set. Meanwhile, the proposed compression risk validation scheme helps assess the impact of any compression method within uncertainty domains. Depending on the complexity of the problem at hand, decision-makers have the flexibility to choose between solution risk and compression risk as the appropriate metric.\\
\indent {\color{black}Both convex hull and box compression techniques achieved nearly a tenfold reduction in the formulation and solving time on the 118-bus system and synthetic Texas grids, compared to scenarios without compression. This efficiency improvement paves the way for the application of chance-constrained methods in real-time economic dispatch, which requires solving problems within the (5-minute) dispatch interval.}\\
\indent The compression risk is highly related to the construction method of the compressed set in the uncertainty domains. {\color{black}Future work will explore optimal compression region constructions and apply our compression analysis scheme to other stochastic programming problems.}

\section{Acknowledgment}
The authors would like to thank Prof. M.C. Campi, Prof. S. Garatti{\color{black}, and anonymous reviewers} for their insightful suggestions.
\bibliographystyle{IEEEtran}
\bibliography{ref}

\begin{IEEEbiography}[{\includegraphics[width=1in,height=1.25in,clip,keepaspectratio]{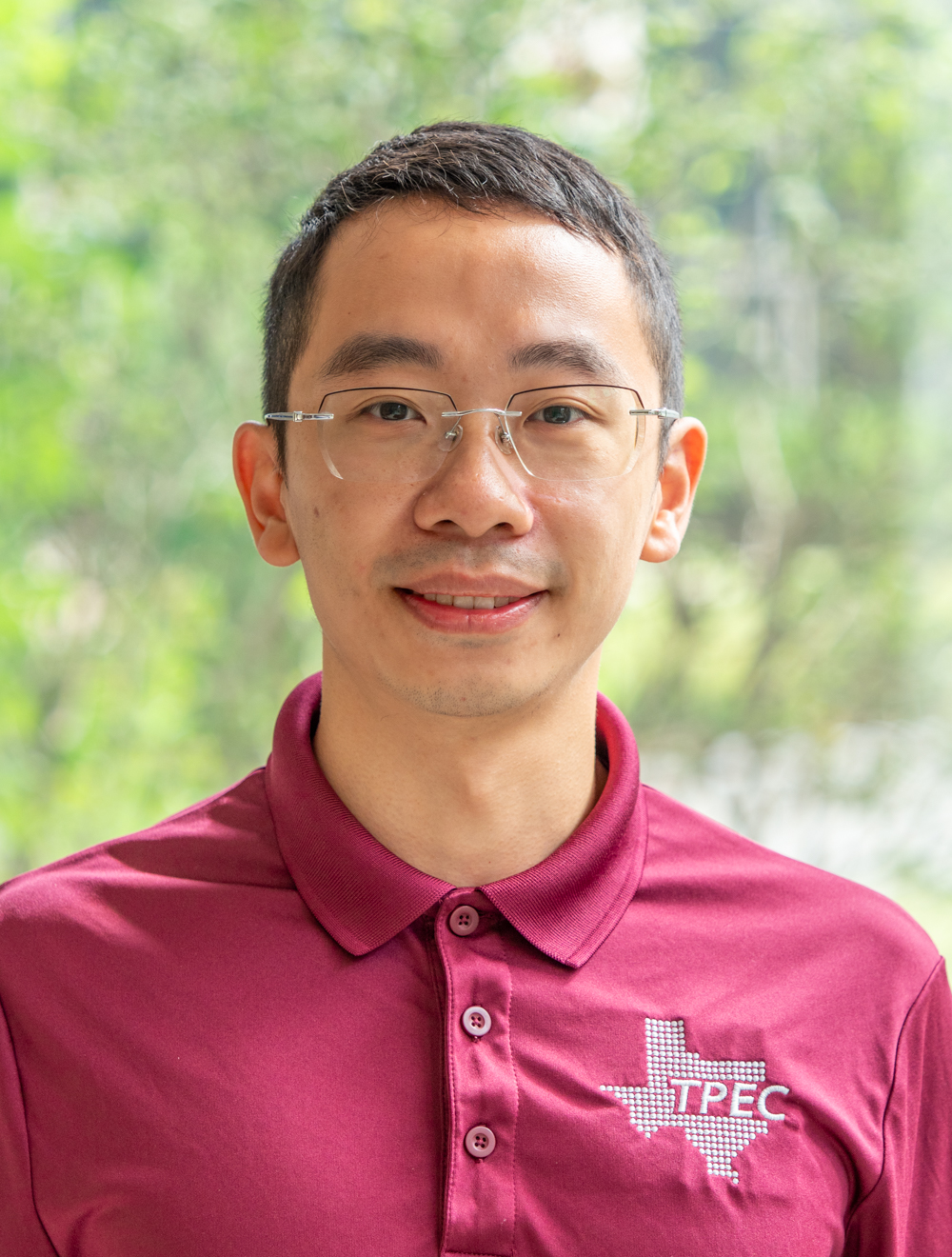}}]{Qian Zhang}
(Student Member, IEEE) received his B.E. and M.S. degrees in Electrical Engineering from Zhejiang University, Hangzhou, China, in 2019 and 2022 respectively. He is currently working toward the Ph.D. degree at Harvard University, Cambridge, MA, USA. His research interests include machine learning, optimization in the electricity market, and power system stability and control.  
\end{IEEEbiography}

\begin{IEEEbiography}[{\includegraphics[width=1in,height=1.25in,clip,keepaspectratio]{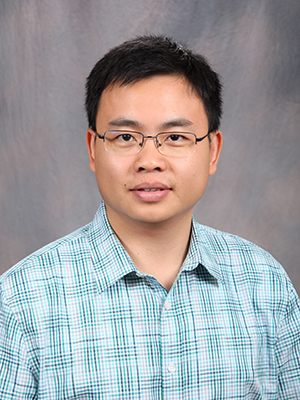}}]{Le Xie}
(Fellow, IEEE) received the B.E. degree in electrical engineering from Tsinghua University, Beijing, China, in 2004, the M.S. degree in engineering sciences from Harvard University, Cambridge, MA, USA, in 2005, and the Ph.D. degree from the Department of Electrical and Computer Engineering, Carnegie Mellon University, Pittsburgh, PA, USA, in 2009. He is currently a Professor at the School of Engineering and Applied Sciences, Harvard University, Cambridge, MA, USA. His research interests include modeling and control of large-scale complex systems, smart grid applications with renewable energy resources, and electricity markets.
\end{IEEEbiography}
\end{document}